\theoremstyle{plain}
\newtheorem{theorem}{Theorem}
\newtheorem{lemma}[theorem]{Lemma}
\theoremstyle{definition}
\newtheorem{definition}[theorem]{Definition}
\newtheorem{example}[theorem]{Example}
\newcommand{\calS}{\mathcal{S}}
\newcommand{\calT}{\mathcal{T}}
\newcommand{\bbS}{\mathbb{S}}
\newcommand{\sfC}{\mathsf{C}}
\newcommand{\sfR}{\mathsf{R}}
\newcommand{\sfS}{\mathsf{S}}
\renewcommand{\|}{~|~}
\renewcommand{\phi}{\varphi}
\newcommand{\powset}{\mathcal{P}}
\renewcommand{\land}{\wedge}
\renewcommand{\lor}{\vee}
\newcommand{\seq}{\Rightarrow}
\newcommand{\Clos}{\mathsf{Clos}}
\newcommand{\tracestep}{\to_{C}}
\newcommand{\Ax}{\ensuremath{\mathsf{Ax}}\xspace}
\newcommand{\RuAndL}{\ensuremath{\land_L}\xspace}
\newcommand{\RuAndR}{\ensuremath{\land_R}\xspace}
\newcommand{\RuNegL}{\ensuremath{\neg_L}\xspace}
\newcommand{\RuNegR}{\ensuremath{\neg_R}\xspace}
\newcommand{\RuDia}{\ensuremath{\mathsf{{\ldia}}}\xspace}
\newcommand{\RuFL}{\ensuremath{\lF_L}\xspace}
\newcommand{\RuFR}{\ensuremath{\lF_R}\xspace}
\newcommand{\RuWeak}{\ensuremath{\mathsf{w}}\xspace}
\newcommand{\RuWeakL}{\ensuremath{\mathsf{w}_L}\xspace}
\newcommand{\RuWeakR}{\ensuremath{\mathsf{w}_R}\xspace}
\newcommand{\RuCut}{\ensuremath{\mathsf{cut}}\xspace}
\newcommand{\RuU}{\ensuremath{\mathsf{u}}\xspace}
\newcommand{\RuF}{\ensuremath{\mathsf{f}}\xspace}
\newcommand{\Ru}{\ensuremath{\mathsf{R}}\xspace}
\newcommand{\RuDischarge}[1][\dx]{\ensuremath{\mathsf{D}^{#1}}\xspace}
\newcommand{\Tokens}{\mathcal{D}}
\newcommand{\dx}{\ensuremath{\mathsf{x}}}
\newcommand{\dy}{\ensuremath{\mathsf{y}}}
\newcommand{\dz}{\ensuremath{\mathsf{z}}}
\newcommand{\Prop}{\ensuremath{\mathsf{Prop}}\xspace}
\newcommand{\GKe}{\ensuremath{\mathsf{GKe}}\xspace}
\newcommand{\GKea}{\ensuremath{\mathsf{GKe}}\xspace}
\newcommand{\MLe}{\ensuremath{\mathsf{MLe}}\xspace}
\newcommand{\sccr}{\twoheadrightarrow}
\newcommand{\cyclicPT}{\calT_\pi^C}
\newcommand{\depth}{\ensuremath{\mathsf{depth}}\xspace}
\newcommand{\rank}{\ensuremath{\mathsf{rank}}\xspace}
\newcommand{\comp}{\ensuremath{\mathsf{comp}}\xspace}
\newcommand{\unf}[2][]{#2^{*_{#1}}}
\newcommand{\merge}[3]{\ensuremath{[#1]#2[#3]}\xspace}
\newcommand{\ldia}{ \scalebox{0.8}{\rotatebox[origin=c]{45}{$\square$}}}
\newcommand{\lbox}{\scalebox{0.8}{$\square$}}
\newcommand{\lm}{\mbox{\ooalign{\ld \cr \hidewidth\raise.05ex\hbox{$* \mkern3.1mu$}\cr}}} 
\newcommand{\lbm}{\ooalign{$\lbox$ \cr \hidewidth\raise.05ex\hbox{$* \mkern5.5mu$}\cr}\hspace{-0.1cm}} \newcommand{\lcm}{\ooalign{$\lbox$ \cr \hidewidth\raise.05ex\hbox{$\cdot \mkern2.9mu$}\cr}}
\newcommand{\lF}{\ensuremath{\mathsf{F}}\xspace}
\newcommand{\anColor}[1]{\textcolor{black}{#1}}
\title{Cut elimination for Cyclic Proofs: \\ A Case Study in Temporal Logic\thanks{This work was supported by the Dutch Research Council [OCENW.M20.048, 617.001.857] and Knut and Alice Wallenberg Foundation [2020.0199].}}
\author{Bahareh Afshari
\institute{University of Gothenburg\\ Gothenburg, Sweden}
\email{bahareh.afshari@gu.se}
\and
Johannes Kloibhofer
\institute{University of Amsterdam\\
Amsterdam, The Netherlands}
\email{j.kloibhofer@uva.nl}
}
\begin{document}
\maketitle
%+++++++++++++++++++++++++++++++ 
\begin{abstract}
We consider modal logic extended with the well-known temporal operator `eventually'
and provide a cut-elimination procedure for a cyclic sequent calculus that captures this fragment. 
The work showcases an adaptation of the reductive cut-elimination method to cyclic calculi. Notably, the proposed algorithm applies to a cyclic proof and directly outputs a cyclic cut-free proof without appealing to intermediate machinery for regularising the end proof.
\end{abstract}
%+++++++++++++++++++++++++++++++ 
\section{Introduction}
%+++++++++++++++++++++++++++++++ 
%
\emph{Non-wellfounded} derivation systems underwent their first rigorous study in~\cite{Niwinski1996,Santocanale2002,Brotherston2006, schutte77}. In contrast to the traditional finite derivation trees, one allows for infinite branches that mimic the induction axioms required for capturing semantics of temporal operators (more generally, implicit/explicit fixpoints).
\textit{Cyclic proofs} constitute the regular fragment of non-wellfounded proofs and are closer in spirit, and often in practice, to finitary proof systems with induction axioms.
To exclude vicious reasoning and capture the true semantics of temporal constructs, a so called \emph{global validity condition} is placed on infinite branches. This condition necessitates a re-investigation of reductive cut elimination in the non-wellfounded setting.

With the advance of non-wellfounded and cyclic proof theory, cut elimination has also received considerable attention (see e.g.~\cite{Fortier2013, Das2018,Savateev2020,Saurin23}). The common approach proceeds in two steps: first cuts are pushed `up' away from the root  by the usual cut reductions to obtain, in the limit, a non-wellfounded derivation; in the second step the acquired limit structure is shown to satisfy the global validity condition.
When it comes to cyclic systems, cut-elimination procedures that can \textit{directly} produce cut-free {cyclic} proofs are rare. Although the aforementioned two-step approach can be applied, the resulting structure may not necessarily be a regular tree. For those calculi whose cyclic fragment does exhaust all validities one may invoke other machinery, such as automata, to find a cut-free cyclic proof.

We consider a simple temporal logic \MLe with a complete cyclic sequent calculus \GKe and describe a syntactic cut-elimination procedure for \GKe that works directly on cyclic proofs. The logic \MLe extends modal logic with the so called `eventually' operator, the reflexive and transitive closure of the diamond modality.  
Although $\MLe$ is expressively weak, this work can be seen as a starting point to apply cut elimination to cyclic proof systems for richer modal fixpoint logics.

Cuts occurring in a cyclic proof can be split into two categories: cuts that reside in a cycle and those that do not. We call them, respectively, \emph{unimportant} and \emph{important} cuts. We show that unimportant cuts are for most part harmless, in the sense that the cut formula does not interfere with the global validity condition and as such, these cuts can be eliminated as usual by pushing them away from the root of the proof-tree. 
The main challenge posed is to eliminate important cuts, as in this case the cut formula may contribute to the global validity condition.

The global validity condition can take different forms. A \emph{trace-based condition} is formulated in terms of traces, i.e. paths of ancestry of formulas, on a branch. Such conditions are relatively easy to formulate but harder to work with.
By annotating sequents with additional information one may succeed in formulating a \emph{branch-based condition} that  is not directly concerned with traces. Such a condition is easier to work with when performing operations such as projecting a proof or zipping two proofs that are critical in reductive cut elimination. 
Annotated proof systems for modal $\mu$-calculus \cite{Bradfield2007} were first introduced by Jungteerapanich and Stirling \cite{Jungteerapanich2010, Stirling2014} building on earlier work by Niwi\'nski and  Walukiewicz \cite{Niwinski1996}. 
There are general ways of obtaining annotated proof systems from cyclic proof systems \cite{Dekker2023, 
LW23}. It is with this rich framework in mind that the presented case analysis has been carried out.
%
% %+++++++++++++++++++++++++++++++ 
\section{Modal Logic with Eventuality}
% %+++++++++++++++++++++++++++++++ 
%
Let \Prop be an infinite set of propositions. \emph{Formulas} of \MLe are defined inductively by	
	\begin{align*}
		\phi := p \| \neg \phi \| \phi \land \phi \|  \ldia  \phi \| \lF \phi
	\end{align*}	
	where $p \in \Prop$.	
The connectives $\lor, \rightarrow$ and $\lbox$ are taken as abbreviations and can be defined as usual. 
To see \MLe as a fragment of the modal $\mu$-calculus, simply define $\lF \phi \equiv \mu x. (\phi \lor \ldia x)$.

Semantics of \MLe-formulas is given with respect to Kripke models in the standard way.
	A \emph{Kripke model} is a tuple $\bbS = (S,R,V)$, where $S$ is a non-empty set, $R$ is a binary relation on $S$ and $V$ is a function $S \rightarrow \powset(\Prop)$. 
	Let $s \in S$ and define $R^*$ to be the reflexive and transitive closure of $R$. The relation $\Vdash$ is defined inductively by:
	\begin{align*}
		\bbS,s &\Vdash p  				&&\Leftrightarrow 	&&p \in V(s) \\
		\bbS,s &\Vdash \neg \phi  		&&\Leftrightarrow  	&&\bbS,s \not\Vdash \phi \\
		\bbS,s &\Vdash \phi \land \psi	&&\Leftrightarrow 	&&\bbS,s \Vdash \phi \text{ and } \bbS,s \Vdash\psi\\
		\bbS,s &\Vdash \ldia \phi		&&\Leftrightarrow  	&&\text{there exists } t \in S \text{ with } sRt \text{ and } \bbS,t \Vdash \phi \\
		\bbS,s &\Vdash \lF\phi 			&&\Leftrightarrow 	&&\text{there exists } t \in S \text{ with } sR^*t \text{ and } \bbS,t \Vdash \phi
	\end{align*}
A formula $\phi$ is called \emph{valid} if $\bbS,s \Vdash \phi$ for every Kripke model $\bbS= (S,R,V)$ and every $s \in S$.

We present the cyclic annotated proof system \GKe, which is sound and complete for \MLe. The annotations allow us to formulate the global validity condition as a simple condition on paths.
Since \MLe is a fragment of the alternation-free $\mu$-calculus, as established in \cite{Marti2021}, one annotation mark per formula suffices for the formulation of the validity condition. Moreover, \MLe-traces are very well-behaved; in particular, cyclic traces do not pass through disjunctions, requiring at most one annotated formula in each sequent, as demonstrated in~\cite{Rooduijn2021}.

An \emph{annotated formula} is a pair $(\phi,a)$, usually denoted as $\phi^a$, where $a \in \{f,u\}$. We call  annotated formulas of the form $\phi^f$ \emph{in focus} and $\phi^u$ \emph{out of focus}. Let $\Gamma$ and $\Delta$ be \emph{sets} of annotated formulas, where every formula in $\Delta$ is out of focus and at most one formula in $\Gamma$ is in focus and this formula has the form $\lF \phi$ or $\ldia \lF \phi$ for some $\phi$. Then $\Gamma \seq \Delta$ is called an \emph{annotated sequent}. As every formula in $\Delta$ is out of focus, we omit the annotations in the right side of the annotated sequent. 
We define $\ldia \Delta = \{\ldia \phi \| \phi \in \Delta\}$ and, for a set of annotated formulas $\Gamma$, we set $\Gamma^u = \{\phi^u \| \phi^a \in \Gamma\}$. 
If it is clear from the context, we will  call annotated formulas/sequents simply formulas/sequents. 

The rules of \GKe for the propositional connectives and the modal operator are standard. The rules \RuFL and \RuFR for the eventually operator $\lF$ stem from the identity $\lF \phi \equiv \phi \lor \ldia \lF \phi$. In \RuFL the formula $\lF \phi^a$ is called \emph{principal}. 
The rule \RuFL and the focus rules \RuU and \RuF are the only ones that change the annotations of formulas. 
The \emph{discharge rule} \RuDischarge[] allows us to discharge leaves, that are labelled by sequents already occurring at one of its ancestors. To qualify as a \GKe proof, this is only allowed if a certain success-condition is met. Each instance of \RuDischarge[\dx] is labelled by a unique \emph{discharge token} $\dx$ taken from a fixed infinite set $\Tokens = \{\dx,\dy,\dz,...\}$.

\begin{figure}[t]
	\begin{minipage}{\textwidth}
		\begin{minipage}{0.20\textwidth}
			\begin{prooftree}
				\infer[left label=\Ax:]0{ p^{\anColor{u}} \seq p}
			\end{prooftree}
		\end{minipage}
		\begin{minipage}{0.22\textwidth}
			\begin{prooftree}
				\hypo{\phi^{\anColor{a}} \seq \Delta}
				\infer[left label= \RuDia:]1{ \ldia \phi^{\anColor{a}} \seq \ldia \Delta}
			\end{prooftree}
		\end{minipage}
		\begin{minipage}{0.250\textwidth}
			\begin{prooftree}
				\hypo{\phi^{\anColor{u}},\psi^{\anColor{u}}, \Gamma \seq \Delta}
				\infer[left label= \RuAndL:]1{\phi \land \psi^{\anColor{u}}, \Gamma \seq \Delta}
			\end{prooftree}
		\end{minipage}
		\begin{minipage}{0.3\textwidth}
			\begin{prooftree}
				\hypo{\Gamma \seq \Delta,\phi}
				\hypo{\Gamma \seq \Delta,\psi}
				\infer[left label= \RuAndR:]2{\Gamma \seq \Delta, \phi \land \psi}
			\end{prooftree}
		\end{minipage}
	\end{minipage}
	
	\bigskip
	
	\begin{minipage}{\textwidth}
		\begin{minipage}{0.35\textwidth}
			\begin{prooftree}
				\hypo{\ldia \lF \phi^{\anColor{a}},\Gamma \seq \Delta}
				\hypo{\phi^{\anColor{u}},\Gamma \seq \Delta}
				\infer[left label= \RuFL:]2{ \lF \phi^{\anColor{a}},\Gamma \seq  \Delta}
			\end{prooftree}
		\end{minipage}
		\begin{minipage}{0.24\textwidth}
			\begin{prooftree}
				\hypo{\Gamma \seq \Delta, \phi, \ldia \lF \phi}
				\infer[left label= \RuFR:]1{ \Gamma \seq \Delta, \lF \phi}
			\end{prooftree}
		\end{minipage}
		\begin{minipage}{0.20\textwidth}
			\begin{prooftree}
				\hypo{\Gamma \seq \Delta, \phi}
				\infer[left label= \RuNegL:]1{\neg \phi^{\anColor{u}},\Gamma \seq  \Delta}
			\end{prooftree}
		\end{minipage}
		\begin{minipage}{0.17\textwidth}
			\begin{prooftree}
				\hypo{\phi^{\anColor{u}}, \Gamma \seq \Delta}
				\infer[left label= \RuNegR:]1{ \Gamma \seq \Delta, \neg \phi}
			\end{prooftree}
		\end{minipage}
	\end{minipage}
	
	\bigskip
		
	\begin{center}{}
		\begin{minipage}{0.39\textwidth}
			\begin{prooftree}
				\hypo{\Gamma_l \seq \Delta_l,\phi}
				\hypo{\phi^{\anColor{u}},\Gamma_r \seq \Delta_r}
				\infer[left label= \RuCut:]2{\Gamma_l, \Gamma_r \seq  \Delta_l, \Delta_r}
			\end{prooftree}
		\end{minipage}	
		\begin{minipage}{0.23\textwidth}
			\begin{prooftree}
				\hypo{\Gamma \seq \Delta}
				\infer[left label=\RuWeakL:]1{\phi^{\anColor{u}},\Gamma \seq \Delta}
			\end{prooftree}
		\end{minipage}		
		\begin{minipage}{0.25\textwidth}
			\begin{prooftree}
				\hypo{\Gamma \seq \Delta}
				\infer[left label=\RuWeakR:]1{\Gamma \seq \Delta,\phi}
			\end{prooftree}
		\end{minipage}
	\end{center}
	
	\medskip
		
	\begin{center}
		\begin{minipage}{0.10\textwidth}
			\phantom{X}
		\end{minipage}
		\begin{minipage}{0.29\textwidth}
			\begin{prooftree}
				\hypo{[\Gamma \seq \Delta]^{\dx}}
				\ellipsis{}{\Gamma \seq \Delta}
				\infer[left label= \RuDischarge:]1{\Gamma \seq \Delta}
			\end{prooftree}
		\end{minipage}
		\begin{minipage}{0.23\textwidth}
			\begin{prooftree}
				\hypo{\phi^{\anColor{f}},\Gamma \seq \Delta}
				\infer[left label= \RuU:]1{ \phi^{\anColor{u}}, \Gamma \seq \Delta}
			\end{prooftree}
		\end{minipage}
		\begin{minipage}{0.25\textwidth}
			\begin{prooftree}
				\hypo{\phi^{\anColor{u}},\Gamma \seq \Delta}
				\infer[left label= \RuF:]1{ \phi^{\anColor{f}}, \Gamma \seq \Delta}
			\end{prooftree}
		\end{minipage}
		
	\end{center}

	\caption{Rules of \GKe}
	\label{fig.rulesGKe}
\end{figure}

We will read proof trees `upwards', so nodes labelled by premises are viewed as children of nodes labelled by the conclusion of a rule. A node $v$ is called an \emph{ancestor}  of a node $u$ if there are nodes $v=v_0$, \dots , $v_n=u$ with $v_{i+1}$ being a child of $v_i$ for $i = 0,...,n-1$ and $n > 0$.

\begin{definition}[Derivation]
	A \emph{\GKe derivation} $\pi = (T,P,\sfS,\sfR)$ is a quadruple such that
	$(T,P)$ is a, possibly infinite, tree with nodes $T$ and parent relation $P \subseteq T \times T$; 
	$\sfS$ is a function that maps every node $u \in T$ to an annotated sequent;
	$\sfR$ is a function that maps every node $u \in T$ to either (i) the name of a rule in Figure \ref{fig.rulesGKe} or 
	(ii) a discharge token, such that (i) the specifications of the rules in Figure \ref{fig.rulesGKe} are satisfied, (ii) every node labelled with a discharge token is a leaf, and (iii)
	for every leaf $l$ that is labelled with a discharge token $\dx \in \Tokens$ there is  an ancestor $c(l)$ of $l$ that is labelled with \RuDischarge[\dx] (and such that $l$ and $c(l)$ are labelled with the same sequent). In the latter condition, we call $l$ a \emph{repeat leaf}, and $c(l)$  \emph{companion} node of $l$.
	
	A \GKea derivation of a sequent $\Gamma \seq \Delta$ is a \GKea derivation, where the root is labelled by $\Gamma \seq \Delta$.
\end{definition}

	Let $\pi = (T,P,\sfS,\sfR)$ be a derivation. We will be working with the following two trees associated to $\pi$.
	\begin{enumerate}
		\item[(i)] The usual proof tree $\calT_\pi = (T,P)$.
		\item[(ii)] The \emph{proof tree with back edges} $\cyclicPT = (T,P^C)$ where $P^C$ is the parent relation plus back-edges for each repeat leaf, i.e.,
  \( P^C = P \cup \{(l,c(l))\mid l \text{ is a repeat leaf}\}.\) 
	\end{enumerate}

A \emph{path} in a \GKea derivation $\pi = (T,P,\sfS,\sfR)$ is a path in $\cyclicPT$.

\begin{definition}[Successful path]
	 A path $\tau$ in a \GKea derivation is called \emph{successful} if
	\begin{enumerate}
		\item every sequent on $\tau$ has a formula in focus, and
		\item  $\tau$ passes through an application of \RuFL, where the principal formula is in focus.
	\end{enumerate}

Let $v$ be a  a repeat leaf in a \GKea derivation  $\pi = (T,P,\sfS,\sfR)$ with companion  $c(v)$, and let $\tau_v$ denote the path  in $(T,P)$ from $c(v)$ to $v$. We call $v$ a \emph{discharged leaf} if the path $\tau_v$ is successful.		
	A leaf is called \emph{closed} if it is either a discharged leaf or labelled by \Ax and \emph{open}, otherwise.
\end{definition}

\begin{definition}[Proof]
	A \emph{\GKea proof} $\pi$ is a finite \GKea derivation, where every leaf is closed.
\end{definition}

The next lemma is usually a consequence of guardedness. For the \GKea proof system this is immediate.
\begin{lemma}\label{lem.guarded}
	If  $v$ is a discharged leaf in a \GKea proof, then there is a node labelled by \RuDia on $\tau_v$.
\end{lemma}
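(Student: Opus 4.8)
The plan is to follow the \emph{unique in-focus formula} along the path $\tau_v$ and to run a parity argument on its number of leading diamonds. Since $\tau_v$ is successful, condition~1 guarantees that every sequent on $\tau_v$ carries a formula in focus, and by the constraint on annotated sequents this focus formula must have the shape $\lF\psi^f$ or $\ldia\lF\psi^f$. Accordingly I assign to each sequent on $\tau_v$ the value $d=0$ in the first case and $d=1$ in the second; this $d$ (the number of leading diamonds of the in-focus formula) always lies in $\{0,1\}$.

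The core of the argument is a rule-by-rule inspection recording how each \GKe rule acts on the in-focus formula \emph{when its node lies on} $\tau_v$. The rules $\RuAndL,\RuAndR,\RuNegL,\RuNegR,\RuWeakL,\RuWeakR,\RuFR,\RuCut$, as well as any application of $\RuFL$ whose principal formula is out of focus, operate on the right-hand side or on out-of-focus formulas; at a branching rule the path is forced into the premise that still contains the focus (otherwise the chosen premise would have no formula in focus, contradicting condition~1), so in all these cases the focus formula is carried along unchanged and $d$ does not change. The focus rules $\RuU$ and $\RuF$ cannot occur on $\tau_v$ at all, because one of their two sequents contains no in-focus formula (the conclusion of $\RuU$ and the premise of $\RuF$, using that $v$ is a repeat leaf and hence not itself such a node), again contradicting condition~1. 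This leaves precisely two $d$-changing situations: an application of $\RuFL$ with focused principal $\lF\psi^f$, where the path must continue into the left premise (its right premise has no focus) and the focus becomes $\ldia\lF\psi^f$, raising $d$ from $0$ to $1$; and an application of $\RuDia$, which is the only rule that can decompose a focused $\ldia\lF\psi^f$, lowering $d$ from $1$ to $0$.

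The conclusion is then immediate. Because $c(v)$ and $v$ carry the same sequent, the in-focus formula, and hence $d$, agrees at the two endpoints of $\tau_v$, so the total change of $d$ along $\tau_v$ is zero; equivalently, the number of $d$-raising steps (focused $\RuFL$) equals the number of $d$-lowering steps, and by the previous paragraph every $d$-lowering step is an application of $\RuDia$. Condition~2 of successfulness supplies at least one focused $\RuFL$ on $\tau_v$, so there is at least one $\RuDia$ on $\tau_v$. Contrapositively: were there no $\RuDia$, then $d$ would be non-decreasing along $\tau_v$ while strictly increasing at the focused $\RuFL$, forcing $d(v)>d(c(v))$ and contradicting $d(c(v))=d(v)$.

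The only real work, and the step I expect to be the main obstacle, is the rule-by-rule verification of how the focus moves: in particular the observations that $\RuU$, $\RuF$ and the right premise of a focused $\RuFL$ are all excluded precisely because they would drop the focus. This is exactly where condition~1 of \emph{successfulness} does the heavy lifting; once it is established that $d$ is confined to $\{0,1\}$ and changes only at focused $\RuFL$ and at $\RuDia$, the parity count is genuinely immediate.
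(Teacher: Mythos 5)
Your proof is correct: the rule-by-rule focus tracking is sound (in particular the exclusions of \RuU, \RuF, and the right premise of a focused \RuFL via condition~1), and the parity count on the leading diamond of the in-focus formula, anchored by the identical sequents at $c(v)$ and $v$, is exactly the reason the lemma holds. The paper offers no explicit proof — it declares the lemma ``immediate'' for \GKea — and your argument is precisely the verification implicit in that remark, so it matches the paper's intended reasoning rather than taking a different route.
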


The soundness and completeness of \GKe follows from \cite{Rooduijn2021} wherein a cyclic hypersequent calculi is given for a class of modal logics with the master modality characterised by frame conditions. In the case of no frame conditions, the calculus of \cite{Rooduijn2021}, which employs the dual modalities  $\neg \ldia \neg \phi$ and $ \neg \lF \neg \phi$ as primitive, coincides with \GKe.

\begin{theorem}[Soundness and Completeness]
	There is a \GKe proof of $\Gamma \seq \Delta$ iff $\bigwedge \Gamma \rightarrow \bigvee \Delta$ is valid.
	\end{theorem}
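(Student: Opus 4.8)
The plan is to prove the two directions separately, and for both I would lean heavily on the cited work of Rooduijn. The statement is a soundness-and-completeness theorem for the cyclic annotated calculus $\GKe$, and the remark immediately preceding it already tells us that in the case of no frame conditions the hypersequent calculus of \cite{Rooduijn2021} collapses to $\GKe$ (after dualising, since that work takes $\neg\ldia\neg$ and $\neg\lF\neg$ as primitive). So at the highest level the theorem is a transfer result: establish a precise correspondence between $\GKe$-proofs and the proofs of the instantiated Rooduijn calculus, and then invoke soundness and completeness there. Let me sketch what each direction genuinely requires.

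For soundness (the left-to-right direction), I would argue directly rather than purely by citation, since it is the cleaner half. Given a $\GKe$ proof $\pi$ of $\Gamma\seq\Delta$, suppose toward a contradiction that $\bigwedge\Gamma\to\bigvee\Delta$ fails at some state $s$ of some model $\bbS$. The idea is to read the proof-tree-with-back-edges $\cyclicPT$ as a refutation search: starting from the root, which is falsified at $s$, I would trace a branch choosing, at each rule application, a premise that remains falsified (at a possibly updated state, in the case of \RuDia, which is the only rule that moves along $R$). Every local rule of Figure~\ref{fig.rulesGKe} has the property that if the conclusion is falsified then some premise is falsified, so this construction never gets stuck at a closed \Ax leaf and, by following back-edges at repeat leaves, produces an infinite falsified path $\tau$ in $\cyclicPT$. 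Because the branch is infinite it must traverse some cycle infinitely often, and the success condition guarantees that a cycle carries a successful segment: a formula stays in focus throughout and the path passes through an application of \RuFL on a principal in-focus formula. The key semantic observation is that along a falsified branch a focused $\lF\phi$ or $\ldia\lF\phi$ formula tracks a \emph{least} fixpoint that is genuinely being unfolded — each passage through the principal \RuFL (in focus) strictly decreases the $R^*$-distance witnessing the eventuality — so an infinite successful trace would force an infinitely descending chain of such witnesses, which is impossible. This contradiction yields soundness. The main obstacle here is making the ``falsified trace witnesses a descending ordinal/distance'' argument precise in the presence of the annotation discipline, and verifying that Lemma~\ref{lem.guarded} supplies the $\RuDia$ step needed so that the trace actually advances along $R$ rather than stalling.

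For completeness (the right-to-left direction), I would not reprove it from scratch: this is exactly where invoking \cite{Rooduijn2021} pays off. The plan is to show that the no-frame-condition instance of their calculus is notationally equivalent to $\GKe$ — matching their hypersequents (which degenerate to ordinary sequents when there are no frame conditions), their rules, their annotation/focus mechanism, and their success condition with ours — so that their completeness theorem transfers verbatim. Concretely, a valid $\bigwedge\Gamma\to\bigvee\Delta$ yields, by their result, a proof in the instantiated calculus, and the syntactic identification converts it into a $\GKe$ proof of $\Gamma\seq\Delta$. The subtlety to check is the dualisation: since Rooduijn works with box-style primitives, one must confirm that the translation commutes with the calculus rules and, crucially, preserves the global validity condition — that their trace condition on least-fixpoint unfoldings corresponds to our focus-based success condition (a point the paper has already flagged, citing \cite{Marti2021,Rooduijn2021}, as resting on the facts that $\MLe$ is alternation-free and that its cyclic traces avoid disjunctions, so one focus mark per sequent suffices). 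I expect this bookkeeping — aligning the two success conditions under the dual translation — to be the real work of the completeness direction, while the existence of a proof itself comes for free from the cited theorem.
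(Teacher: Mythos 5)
The paper does not actually prove this theorem: it is obtained entirely by transfer, the paragraph preceding the statement noting that the hypersequent calculus of \cite{Rooduijn2021}, instantiated with no frame conditions and dualised, coincides with \GKe, so that soundness \emph{and} completeness both follow from that work. Your completeness direction is therefore exactly the paper's argument (including the bookkeeping you correctly identify as the real content: matching the rules and, crucially, the success conditions under the dual translation). Your soundness direction, by contrast, is a genuinely different and more self-contained route — a counter-model branch construction plus infinite descent — which the paper does not carry out. What your approach buys is independence from the dualisation bookkeeping for that half of the theorem; what the paper's approach buys is brevity, at the cost of resting both directions on the claimed coincidence of the two calculi.

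Two points in your soundness sketch are genuine gaps as stated. First, the claim that ``each passage through the principal \RuFL (in focus) strictly decreases the $R^*$-distance'' is false for an arbitrary falsification-preserving branch: if $\phi$ already holds at the current state $s$ but you nevertheless follow the $\ldia\lF\phi$ premise, the witness distance can \emph{increase}. You must build the branch with a preference: at an in-focus \RuFL, follow the $\phi^u$ premise whenever it is falsified, and follow the $\ldia\lF\phi$ premise only when $\phi$ fails at $s$. Since the $\phi^u$ premise carries no formula in focus, it lies outside every non-trivial cluster, so on the eventual cluster-bound suffix of the branch only the second case occurs, and there the descent (distance $d$ at the \RuFL node, $d-1$ after the subsequent \RuDia, which must occur because the focused $\ldia\lF\phi$ can only turn back into $\lF\phi$ at a \RuDia) is correct. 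Second, the success condition constrains the repeat paths $\tau_v$ from companions to discharged leaves, not arbitrary infinite paths in $\cyclicPT$: your branch repeatedly re-enters the cycle region but need not traverse any complete $\tau_v$, so ``a cycle carries a successful segment'' does not immediately give infinitely many in-focus \RuFL nodes \emph{on the branch}. One needs an extra combinatorial step, e.g.: writing the suffix as tree-path segments between back edges, the companion of each back edge taken is a tree ancestor of the next leaf reached, so either companions strictly descend infinitely often (each such segment then contains a whole repeat path) or they stabilise at a single companion $c$, after which every segment is exactly some $\tau_l$ and hence successful. With these two repairs your argument goes through, but without them the descent and the appeal to the success condition do not connect.
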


%+++++++++++++++++++++++++++++++ 
\section{Cut Elimination}\label{sec.cutElimination}
%+++++++++++++++++++++++++++++++ 
We introduce a cut-elimination method based on reductive cut elimination but tailored to cyclic proofs. 
The cut reductions are the usual ones save for the additional care that must be taken when a discharge rule is duplicated. The reductions are listed in \autoref{sec.sub.CutReductions}.

The main difficulty in this approach are cut reductions, that alter successful paths. As an example consider the principal cut

\(
	\begin{prooftree}
		\hypo{\pi_0}
				\infer[no rule]1{\Gamma \seq \Delta, \phi, \ldia\lF \phi}
		\infer[left label= \RuFR]1{\Gamma \seq \Delta, \lF \phi}
		\hypo{\pi_1}
				\infer[no rule]1{\ldia \lF \phi^f, \Gamma \seq \Delta}
		\hypo{\pi_2}
				\infer[no rule]1{\phi^u, \Gamma\seq \Delta}
		\infer[left label= \RuFL]2{\lF \phi^f,\Gamma \seq \Delta}
		\infer[left label= \RuU]1{\lF \phi^u,\Gamma \seq \Delta}
		\infer[left label= \RuCut]2{\Gamma \seq \Delta}
	\end{prooftree}
	\)

\vspace{-1.5em}
\noindent
and its reduction to
\hfill
	\(\begin{prooftree}
		\hypo{\pi_0}
				\infer[no rule]1{\Gamma \seq \Delta, \phi, \ldia\lF \phi}
		\hypo{\pi_1}
				\infer[no rule]1{\ldia \lF \phi^f, \Gamma \seq \Delta}
		\infer[left label= \RuU]1{\ldia \lF \phi^u, \Gamma \seq \Delta}
		\infer[left label= \RuCut]2{\Gamma \seq \Delta, \phi}
		\hypo{\pi_2}
				\infer[no rule]1{\phi^u,\Gamma \seq \Delta}
		\infer[left label= \RuCut]2{\Gamma \seq \Delta}
	\end{prooftree}\)

Note that in this reduction an \RuFL rule, where the principal formula is in focus, gets removed. Hence, paths that were successful before the reduction, could be unsuccessful afterwards.
We define {`unimportant' cuts} in such a way that in all cut reductions, successful paths remain successful, i.e. the behaviour from above may not occur. Thus, for unimportant cuts it suffices to push all cuts upwards until a repeat is reached. 
The treatment of {important cuts} is more complicated, as descendants of the cut-formulas could be in focus, and therefore being critical in the discharge condition of repeats. In order to make these observations formal we first need some technical definitions.
% %+++++++++++++++++++++++++++++++ 
\subsection{Preliminary definitions}\label{sec:pre}
% %+++++++++++++++++++++++++++++++ 
Let $(G,E)$ be a graph. A \emph{strongly connected subgraph} of $G$ is a set of nodes $A \subseteq G$, such that from every node of $A$ there is a path to every other node in $A$. A \emph{cluster} of $G$ is a maximally strongly connected subgraph of $G$. A cluster is called \emph{trivial} if it consists of only one node. 

Let $\pi$ be a \GKe proof. A \emph{cluster} of $\pi$ is a cluster of $\cyclicPT$. Let $\calS_\pi$ be the set of non-trivial clusters of $\pi$. We define a relation $\sccr$ on $\calS_\pi$ as follows: $S_1 \sccr S_2$ if $S_1 \neq S_2$ and there are nodes $v_1 \in S_1, v_2 \in S_2$ such that there is a path from $v_1$ to $v_2$ in $\cyclicPT$. The relation $\sccr$ is irreflexive, transitive and antisymmetric. We write $\depth(S)$ for the length of the longest path in $(\calS_\pi,\sccr)$ starting from cluster $S$.

For a node $v$ in a proof $\pi$, we define the \emph{depth} of $v$ to be
\[
\depth(v)=\max\{\depth(S)\| S \in \calS_\pi \text{ and there is a path from }v \text{ to some } u \in S\}
\] 
where $\max\emptyset=0$.

The \emph{component} of $v$, written $\comp(v)$ is the set of nodes $u \in \pi$, that are reachable from $v$ in $\cyclicPT$ with $\depth(u) = \depth(v)$. Note that $\comp(v)$ does not have to coincide with a cluster in $\pi$, but may contain multiple clusters. The component of the root is called the \emph{root-component} and the cluster of the root is called the \emph{root-cluster}.

Any node $v$ in a non-trivial cluster of a \GKe proof $\pi$ has a formula in focus, as it is on the path $\tau_l$ of a discharged leaf $l$ to its companion. For nodes in a trivial cluster this is not necessarily the case. We can apply \RuF and \RuU rules in a certain way to minimise nodes with a formula in focus. By doing so, nodes with a formula in focus resemble the non-trivial clusters of the proof tree with back edges: Any node with a formula in focus is either in a non-trivial cluster or is labelled by \RuF. 

Formally, we call a \GKe proof \emph{minimally focused} if 
	(i) if $v$ is labelled by \RuU, then its child is labelled by \RuDischarge[] and 
	(ii) if $\depth(v) < \depth(v')$ for a child $v$ of $v'$, then $v$ is labelled by \RuF.
As every proof can be transformed to a minimally focused proof of the same sequent where only \RuU and \RuF rules are added and removed, we will always assume that \GKe proofs are minimally focused.

\begin{definition}[Important cut]
	Let $\sfC$ be an occurrence of a \RuCut-rule in a \GKea proof $\pi$. 
	We call $\sfC$ \emph{important}, if all formulas in the conclusion of $\sfC$ are out of focus; and \emph{unimportant}, otherwise. 
\end{definition}

\begin{definition}[Cut rank]
	The \emph{rank of a formula} $\phi$, denoted  $\rank(\phi)$, is the maximal nesting depth of $\lF$'s in $\phi$.\footnote{For example, the rank of $\lF(\ldia \lF p \land \lF \lnot \ldia \lF p)$ is $3$.}
	The \emph{rank of a \RuCut-rule} is the rank of its cut formula. 	
	The \emph{cut-rank of a proof} is the maximal rank amongst its \RuCut-rules.
\end{definition}

Let $v'$ be a child node of $v$ in a \GKe derivation $\pi$. We call a formula $\phi'$ at $v'$ an \emph{immediate descendant} of $\phi$ at $v$ if either (i) $\phi$ and $\phi'$ are designated formulas in the rule description or (ii) $\phi = \phi'$ and both are at the same side of the sequent.
A formula $\psi$ at a node $u$ is called a \emph{descendant} of a formula $\phi$ at a node $v$ if there is a path $v=v_0$, \dots, $v_n=u$ containing formulas $\phi=\phi_0$, \dots, $\phi_n=\psi$ respectively such that $\phi_{i+1}$ is an immediate descendant of $\phi_i$ for $i = 0,...,n-1$.
A descendant $\psi$ at a node $u$ of a formula $\phi$ at a node $v$ is called a \emph{component descendant} of $\phi$ if $u \in \comp(v)$.

The next lemma justifies the definition of unimportant cuts. It implies, that pushing unimportant cuts upwards does not alter successful paths. Note that this uses our assumption that \GKe proofs are minimally focused.

\begin{lemma}\label{lem.unimporantCompDescendent}
	Let $\sfC$ be an unimportant cut in a \GKea proof $\pi$. Every component descendant of the cut-formula of $\sfC$ is out of focus.
\end{lemma}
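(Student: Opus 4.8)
The plan is to argue by contradiction, exploiting that—reading a derivation upwards—a formula can be brought \emph{into} focus only by \RuU, and that in a minimally focused proof such a step sits directly below a companion node, hence inside a non-trivial cluster. Fix notation: let $v_0$ be the conclusion of $\sfC$ and $v_1,v_2$ its two premises, so that the cut formula $\phi$ occurs out of focus at both premises (on the right in the left premise, and as $\phi^u$ in the right premise). Since $\sfC$ is unimportant, $v_0$ carries an in-focus formula $\chi$; as $v_0$ is labelled \RuCut and not \RuF, the fact that any node with a formula in focus lies in a non-trivial cluster or is labelled \RuF places $v_0$ in a non-trivial cluster $S$, whence $\depth(v_0)=\depth(S)$.

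The first step is to confine both premises to $\comp(v_0)$, i.e.\ to show $\depth(v_1)=\depth(v_2)=\depth(v_0)$. As $\chi$ is a context formula it occurs with focus in exactly one premise, say $v_1$, while $v_2$ carries no formula in focus. Since $v_0$ is labelled \RuCut it is not a repeat leaf, so its only outgoing edges in $\cyclicPT$ are the tree edges to $v_1,v_2$; as $v_0$ lies in a non-trivial cluster, some cycle runs through it and must continue through a premise, and that premise—lying on a cycle—has a formula in focus, so it is $v_1$. Hence $v_1\in S$ and $\depth(v_1)=\depth(S)=\depth(v_0)$. For $v_2$: were $\depth(v_2)<\depth(v_0)$, the minimal-focus clause~(ii) would force $v_2$ to be labelled \RuF, which always carries a focused formula, contradicting that $v_2$ has none. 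So $\depth(v_2)=\depth(v_0)$ as well, and therefore $\comp(v_1),\comp(v_2)\subseteq\comp(v_0)$, all at depth $\depth(S)$.

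Now suppose, for contradiction, that some component descendant $\psi$ at a node $u$ of the cut formula is in focus; without loss of generality it descends from the occurrence of $\phi$ at $v_1$ (the case of $v_2$ is identical). Along the descendant trace from $\phi$ at $v_1$ to $\psi$ at $u$, depth is non-increasing and $\depth(u)=\depth(v_1)$, so every node on the trace has depth $\depth(S)$ and is reachable from $v_0$; in particular they all lie in $\comp(v_0)$. Let $x'\to x$ be the first edge along which the traced formula acquires focus. Of the three annotation-changing rules only \RuU turns an out-of-focus formula into an in-focus one reading upwards, so $x'$ is an \RuU node; were any context formula at $x'$ in focus, its premise would carry two focused formulas, so $x'$ itself has no formula in focus. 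By clause~(i) its unique child $x$ is a companion, which lies on a cycle through its repeat leaf and thus belongs to a non-trivial cluster $S'$ with $\depth(S')=\depth(x)=\depth(S)$. As $x$ is reachable from $v_0\in S$ with $\depth(S')=\depth(S)$, were $S'\neq S$ we would have $S\sccr S'$ and hence $\depth(S)>\depth(S')$; so $S'=S$ and $x\in S$. But then $x$ reaches $v_0$, $v_0$ reaches $x'$, and $x'$ reaches $x$, so $x'$ lies on a cycle through $v_0$ and belongs to $S$—contradicting that the \RuU node $x'$ has no formula in focus, as every node of a non-trivial cluster does.

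The step I expect to be most delicate is the depth bookkeeping of the second paragraph: the whole content of the lemma is that for an unimportant cut neither premise can escape the conclusion's cluster to strictly smaller depth, since that would let the cut formula quietly regain focus in a deeper cycle. Establishing this rests on two complementary observations—that the unique focus-carrying premise must lie on the cycle witnessing $v_0\in S$, while the focus-free premise cannot drop depth because a depth-decreasing step is, by the minimal-focus convention, an \RuF step and therefore focused. Once both premises are pinned to $\comp(v_0)$, the remainder is the routine observation that refocusing demands a \RuU step at a focus-free node, which the cluster structure forbids.
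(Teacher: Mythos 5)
The paper states this lemma without any proof at all (it is presented as a consequence of the standing assumption that proofs are minimally focused), so there is no official argument to compare yours against; what you have written fills that gap, and it is essentially correct. Your decomposition isolates exactly the two places where the hypotheses enter: unimportance, together with the paper's observation that any focused node lies in a non-trivial cluster or is labelled \RuF, pins the conclusion $v_0$ of $\sfC$ into a non-trivial cluster $S$; the two premises are then confined to $\comp(v_0)$ (the focused premise because any cycle through $v_0$ must exit through a child lying in $S$, the unfocused one because a depth drop would force an \RuF label by clause~(ii), which is impossible for a focus-free sequent); and finally a refocusing step along the descendant trace requires a \RuU node $x'$ carrying no focused formula, whose child $x$ the depth bookkeeping forces into $S$ itself, so that $x'$ lies on a cycle through $v_0$ and would have to be focused --- a contradiction. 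The trace-side observations (only \RuU promotes $u$ to $f$ reading upwards, back edges preserve annotations, the cut formula starts out of focus at both premises) are all verified correctly against the rules.

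One step needs a small repair. From clause~(i) you may conclude only that $x$ is \emph{labelled} \RuDischarge[]; the definition of derivation does not guarantee that every \RuDischarge[] node has a matching repeat leaf (the paper's unfolding construction explicitly removes such vacuous discharge nodes, so they are admitted by the formalism), hence ``$x$ is a companion and therefore lies on a cycle'' is not immediate. The fix is to reuse the same fact you invoked for $v_0$: the node $x$ carries the traced formula in focus and is not labelled \RuF, so it lies in a non-trivial cluster $S'$, and your depth argument then shows $S'=S$ exactly as written. With that substitution the proof is airtight relative to the properties the paper attributes to minimally focused proofs. It is worth noting --- a weakness of the paper rather than of your argument --- that this attributed property is genuinely needed and does not follow from clauses~(i) and~(ii) alone: those clauses do not exclude, for instance, a root sequent carrying a focused formula outside every cycle, a configuration in which the lemma itself would fail; your proof correctly leans on the stronger stated property rather than on the bare formal definition.
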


\begin{definition}[Subproof]
	For a node $v$ in $\pi$, the \emph{subproof of $\pi$ rooted at $v$}, denoted $\pi_v$, is the result of recursively replacing every open leaf $l$ in $\pi_v$  with $\pi_{c(l)}$. In order to guarantee that \RuDischarge[] rules are labelled by unique discharge tokens, discharge tokens $\dx$ are replaced by fresh discharge tokens, whenever a \RuDischarge[\dx] rule is duplicated during a reduction step.
\end{definition}
Note that $\pi_v$ is well-defined, as $v$ is a descendent of $c(l)$ for every open leaf $l$. Hence, at some point $\pi_{c(l)}$ has no open leaves.

% %+++++++++++++++++++++++++++++++ 
\subsection{Important cut}
% %+++++++++++++++++++++++++++++++ 

The strategy to eliminate an important cut of rank $n$ is as follows. We first push the cut upwards, until we reach a `critical' cut. Then we take $\pi_0$ and $\pi_1$, the proofs of the left and right premise of this cut, and `\emph{zip}' them together, while deleting all descendants of the cut formula. This is done recursively, similarly to how a cut can be pushed `up'. In this process we treat $\pi_0$ and $\pi_1$ differently: In $\pi_1$ the formula in focus might be the cut formula and thus may be deleted, in $\pi_0$ we keep the same formulas in focus and successful repeats in $\pi_0$ will be projected to find successful repeats in our target proof.

In order to find suitable repeats we also have to keep track of the already constructed subproof. The intermediate objects of this process are called  \emph{traversed proofs}, that correspond to proofs, where on every branch of the proof there is at most one cut of rank $n$. 

\begin{definition}[Critical cut]\label{def.cutsTidy}
	Let $\pi$ be a \GKea proof and $\sfC$ an important cut in $\pi$. We call $\sfC$ \emph{critical} if the cut formula of $\sfC$ is of the form $\lF \phi$.
\end{definition}

\begin{lemma}\label{lem.cutsTidy}
	Let $\pi$ be a \GKe proof of cut-rank $n$ where the only cut of rank $n$ is important and at the root. Then there is a \GKea proof $\pi'$ of the same sequent with cut-rank $n$, where all cuts of rank $n$ are critical.
\end{lemma}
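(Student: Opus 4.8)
The plan is to realise the informal recipe ``push the root cut upwards until its cut formula becomes $\lF$-headed'' as a terminating sequence of the reductions of \autoref{sec.sub.CutReductions}. Write $\sfC$ for the root cut and $\chi$ for its cut formula, so $\rank(\chi)=n$. If $\chi$ is of the form $\lF\phi$ then $\sfC$ is already critical and there is nothing to do; since $n\ge 1$ the formula $\chi$ is not a proposition, so otherwise its outermost connective is one of $\neg$, $\land$, $\ldia$. I would prove the following slightly strengthened statement by induction on the size $|\chi|$: \emph{given derivations $\pi_0$ of $\Gamma\seq\Delta,\chi$ and $\pi_1$ of $\chi^u,\Gamma\seq\Delta$ in which every rank-$n$ cut is already critical and such that the induced cut on $\chi$ is important, there is a \GKe proof of $\Gamma\seq\Delta$ in which every rank-$n$ cut is critical.} The lemma is the special case where $\pi_0$ and $\pi_1$ contain no cut of rank $n$ at all, so their rank-$n$ cuts are vacuously critical.

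For the inductive step I carry out the usual case analysis on the last rules of $\pi_0$ and $\pi_1$. If $\chi$ is not principal in one premise I permute $\sfC$ upward past the corresponding rule; if $\chi$ is introduced by weakening, I discard the cut in favour of \RuWeakL/\RuWeakR; and if $\chi$ is principal in both premises I apply the matching logical reduction. For $\chi=\neg\psi$ and $\chi=\psi_1\land\psi_2$ this replaces $\sfC$ by cuts on $\psi$, respectively on $\psi_1$ and $\psi_2$; for $\chi=\ldia\psi$ the modal reduction through \RuDia produces a single cut on $\psi$ in the successor sequent. In every case the new cut formulas are proper subformulas of $\chi$, hence strictly smaller: those of rank $<n$ are harmless (they lie below the cut-rank), and to each new cut of rank $n$ I apply the induction hypothesis, whose preconditions are inherited since the relevant premise-proofs are subderivations of $\pi_0,\pi_1$ with conclusions that remain out of focus. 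Because the reductions only ever decompose connectives strictly above the outermost $\lF$, the recursion halts precisely at the $\lF$-headed, i.e.\ critical, cuts.

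Two invariants must be maintained. First, importance: a rank-$n$ cut may only be declared critical once it is important. The $\neg$- and $\land$-reductions leave the cut's conclusion unchanged up to additional out-of-focus formulas, and the commutative step for \RuF merely relocates the cut above the focusing rule, so importance is preserved; in the modal case one uses that, the \RuDia-conclusion being out of focus by the invariant, its left diamond $\ldia\phi^a$ satisfies $a=u$, whence the successor cut on $\psi$ is again important. Second, the success condition on repeats must survive. The key observation is that the only reduction able to destroy a successful path---the principal reduction on an $\lF$-formula, which deletes an \RuFL whose principal formula is in focus, as in the example of \autoref{sec.cutElimination}---is never carried out, since we halt at critical cuts; every reduction we do apply either preserves or merely relocates the \RuFL-applications along a path.

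I expect the cyclic bookkeeping to be the real obstacle. Pushing $\sfC$ towards the leaves of the finite tree $\calT_\pi$ terminates along acyclic branches at axioms or at nodes where $\chi$ is principal or weakened, but when the cut must be permuted across a \RuDischarge[] rule the discharged subproof has to be duplicated with a fresh discharge token, and a branch on which $\chi$ is carried around an entire cluster without ever becoming principal has to be recognised as weakening-like, so that the cut is removed there rather than looped indefinitely. The substance of the argument is to verify that these manipulations again yield a finite, minimally focused \GKe proof with all discharge conditions intact. I would control termination of the permutation phase by a secondary induction that pushes $\sfC$ out of its current component---decreasing $\depth(\sfC)$ and driving the cut to the occurrences where $\chi$ is principal---nested inside the primary induction on $|\chi|$.
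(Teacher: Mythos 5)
Your overall route is the paper's: push the single important root cut upwards with the reductions of Appendix~\ref{sec.sub.CutReductions}, never performing the $\lF$-principal reduction, and argue that the cut formula is eventually decomposed until it is $\lF$-headed. The genuine gap sits exactly at the point you yourself flag as ``the real obstacle'': termination of the permutation phase through cycles, and neither of your two devices closes it. The secondary induction on $\depth(\sfC)$ is not a well-founded measure for the dangerous part of the process: while the cut is permuted inside its current component --- in particular while discharged subproofs are unfolded and duplicated, which is where indefinite looping threatens --- its depth is, by definition of component, unchanged (the unfolded copies contain fresh clusters of the same depth), so the measure only registers progress once the phase you need to bound is already over. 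And the proposed ``weakening-like'' removal of a cut whose formula is carried around an entire cluster is not a legitimate step as stated: you give no argument that deleting such a cut is sound (the cut formula may well be principal on \emph{other} branches through the same cluster, and cut reductions act at the cut, not branch-by-branch), nor any criterion for detecting the situation.

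What is missing is precisely the paper's appeal to Lemma~\ref{lem.guarded} combined with the shape of \RuDia. Every path from a companion to its discharged leaf contains a \RuDia, and \RuDia has no side formulas: its conclusion consists exclusively of $\ldia$-formulas. Hence a cut formula can never be carried past a \RuDia as a side formula, so the looping scenario you try to patch cannot occur at all: before the cut reaches the \RuDia, either its formula is weakened away (reading upwards, a never-principal formula must disappear by weakening below the \RuDia, and then the weakening reduction deletes the cut), or its formula is $\ldia$-headed and the principal \RuDia reduction fires, strictly decreasing its syntactic size. Since no reduction other than the (never applied) $\lF$-principal one increases that size, termination follows, with every surviving rank-$n$ cut critical. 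A second, smaller gap concerns your importance invariant: you check it for $\neg$, $\land$, \RuF and \RuDia but not for \RuU, and pushing a cut above an \RuU places a focused formula in its conclusion, destroying importance and with it the precondition of your induction hypothesis. The paper's reduction for the \RuDischarge/\RuU configuration avoids this by \emph{unfocusing} the unfolded subproof, not merely refreshing discharge tokens.
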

\begin{proof}
	We can apply cut reductions from Appendix \ref{sec.sub.CutReductions} until all cuts of rank $n$ are critical. In all cut reductions, except where $\lF \phi$ is principal, the syntactic size (i.e., the number of symbols) of the cut formula is not increased, and in the reduction for $\ldia$ the syntactic size of the cut formula decreases. Due to Lemma \ref{lem.guarded} there is a \RuDia rule on the path from a companion node to all of its discharged leaves. Hence, the syntactic size of the cut formula $\psi$ decreases until $\psi$ is of the form $\lF \phi$.
\end{proof}

\begin{definition}[Traversed proof]
	An \emph{$\lF\phi$-traversed proof} $\rho$ of $\Sigma \seq \Pi$ is a derivation of $\Sigma \seq \Pi$, where all leaves are either closed or labelled by a sequent $\Gamma_l,\Gamma_r \seq \Delta_l,\Delta_r$ together with a triple $(\pi_l,\psi,\pi_r)$, written as $\merge{\pi_l}{\psi}{\pi_r}$, such that $\psi \equiv \lF\phi$ or $\psi \equiv \ldia\lF\phi$,  $\pi_l$ is a \GKe proof of $\Gamma_l\seq \Delta_l, \psi$ and $\pi_r$ is a \GKe proof of $\psi^a, \Gamma_r'\seq \Delta_r$, where $(\Gamma'_r)^u = \Gamma_r$. 
	If $\lF\phi$ is clear from the context we will just write \emph{traversed proof}.
\end{definition}
By applying a cut with cut-formula $\psi$ at every open leaf, every $\lF \phi$-traversed proof $\rho$ corresponds to a \GKe proof $\pi$, where on every branch of the proof there is at most one cut of rank $\rank(\lF \phi)$. Hence, transforming a $\lF \phi$-traversed proof to a traversed proof without open leaves corresponds to eliminating cuts of rank $\rank(\lF \phi)$.

\begin{lemma}\label{lem.cutsImportant}
	Let $\pi$ be a \GKe proof with a critical cut of rank $n+1$ at the root, i.e. $\pi$ is of the form
	\begin{align*}
		\begin{prooftree}
			\hypo{\pi_0}
			\infer[no rule]1{\Sigma_0 \seq \Pi_0, \lF \phi}
			\hypo{\pi_1}
			\infer[no rule]1{\lF \phi^u, \Sigma_1 \seq \Pi_1}
			\infer[left label= \RuCut]2{\Sigma_0, \Sigma_1 \seq \Pi_0, \Pi_1}
		\end{prooftree}
	\end{align*}	
	where $\pi_0$ and $\pi_1$ are of cut-rank $n$, and the cut formula $\lF\phi$ has rank $n+1$. Then we can construct a proof $\pi'$ of $\Sigma_0,\Sigma _1 \seq \Pi_0,\Pi_1$ of cut-rank $n$.
\end{lemma}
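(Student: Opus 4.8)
The plan is to eliminate the critical cut by a "zipping" construction that builds an $\lF\phi$-traversed proof of $\Sigma_0,\Sigma_1 \seq \Pi_0,\Pi_1$ and then closes all its open leaves. The starting traversed proof is the single open leaf $\merge{\pi_0}{\lF\phi}{\pi_1}$. I process each open leaf $\merge{\pi_l}{\psi}{\pi_r}$ by a recursion that mirrors pushing the cut on $\psi$ upward, but drives the recursion on the structure of $\pi_r$ (the right premise), since it is in $\pi_r$ that the cut formula $\psi$ lives in the antecedent and may be in focus. Concretely, I case-split on the last rule of $\pi_r$. If that rule does not act on $\psi$, I permute it below the cut: the rule is applied in the traversed proof and the recursion continues on the smaller right subproof(s), with $\pi_l$ copied unchanged (duplicated, with fresh discharge tokens, when the rule branches). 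The interesting cases are when $\psi$ is principal in $\pi_r$: when $\psi \equiv \lF\phi$ is unfolded by \RuFL, the recursion splits into a $\ldia\lF\phi$-branch and a branch where $\psi$ has been replaced by $\phi^u$; in the latter the cut formula has been consumed, so I can simply cut with $\pi_l$'s conclusion against $\phi^u$ — but this is a cut of rank $n$ (since $\rank(\phi)=n$), hence allowed. When $\psi \equiv \ldia\lF\phi$ is moved under a \RuDia, I peel a diamond off both premises simultaneously, using the subproof of $\pi_l$ above its matching \RuFR/\RuDia step, landing back at an $\lF\phi$-traversed leaf of strictly smaller combined size.

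The key structural points to verify are termination and that the result is a genuine traversed proof, i.e. that every new open leaf again has the form $\merge{\pi_l'}{\psi'}{\pi_r'}$ with $\psi' \in \{\lF\phi,\ldia\lF\phi\}$ and the side-proofs of the correct shape, and that every closed leaf either is an axiom, a discharge, or is closed off by a rank-$n$ cut with $\phi^u$. Termination of the recursion on $\pi_r$ is not immediate because $\pi_r$ is cyclic: I handle repeats in $\pi_r$ by the same back-edge/discharge bookkeeping used for ordinary cut-pushing — when the right subproof reaches a companion sequent already seen in the construction, I discharge against the corresponding node of the traversed proof rather than recursing forever. This is exactly where the "already constructed subproof" must be tracked, as flagged in the text before Definition \ref{def.cutsTidy}. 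The measure that decreases along each step is the distance in $\pi_r$ to its leaves (with repeats resolved by discharge), so the procedure yields a finite, regular derivation.

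The genuinely hard part will be the global validity condition: I must show every infinite branch of the constructed $\pi'$ is successful, equivalently that every repeat leaf is discharged. Repeats fall into two kinds. First, repeats lying wholly inside the transported $\pi_l$-part: here I keep the same formulas in focus as in $\pi_0$ and project its successful paths unchanged, since $\pi_l$ is copied verbatim and the cut formula $\psi$ never enters its antecedent. Second, repeats created by the zipping recursion along the $\pi_r$-side: the cut formula $\psi$ may have carried the focus, and an \RuFL on $\psi$ with principal formula in focus is precisely what the reduction removes (the obstruction highlighted by the example after Section \ref{sec.cutElimination}). The crux is therefore to argue that whenever a successful path of $\pi_r$ owed its success to an \RuFL-unfolding of the focused $\psi$, the corresponding branch in $\pi'$ is instead closed off — via the rank-$n$ cut against $\phi^u$ and the proof $\pi_l$ — rather than remaining an undischarged infinite branch. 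I expect to establish this by a trace-preservation invariant maintained through the recursion: outside the $\psi$-focused case, traces (and hence focus and the \RuDia-step of Lemma \ref{lem.guarded}) are transported faithfully, so successful repeats of $\pi_0$ and of the non-principal part of $\pi_1$ remain successful; and the only branches that lose their witnessing \RuFL are finitized by the cut with $\phi^u$. Verifying this invariant across all the permutation cases — especially checking that duplicating a \RuDischarge[] with fresh tokens preserves the companion/repeat-leaf pairing and the success of $\tau_v$ — is the main technical obstacle.
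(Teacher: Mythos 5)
Your overall framework is the same as the paper's: build an $\lF\phi$-traversed proof starting from the single open leaf $\merge{\pi_0}{\lF\phi}{\pi_1}$, permute non-principal rules upward, and at the principal $\RuFR$/$\RuFL$ interaction split off the $\phi^u$ premise with a new cut of rank $n$ while the zip continues on $\ldia\lF\phi$. The genuine gap is in your termination and discharge mechanism, which is driven by the wrong premise. You propose to discharge ``when the right subproof reaches a companion sequent already seen in the construction'' and to use as measure ``the distance in $\pi_r$ to its leaves.'' This fails for two reasons. First, a discharge in \GKe requires the path from the companion to the leaf \emph{in the constructed proof} $\rho$ to be successful; but the success of a cycle in $\pi_1$ may rest entirely on the focus carried by the cut formula (the cycle $\lF\phi^f \to \ldia\lF\phi^f \to \lF\phi^f$), and that focus is exactly what the zip deletes --- moreover the sequent of $\rho$ at the candidate companion also contains the left components $\Gamma_l, \Delta_l$, which have meanwhile evolved, so even sequent equality is not guaranteed. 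Second, your claim that the branches which lose their witnessing \RuFL are ``finitized by the cut with $\phi^u$'' is false: the \RuFL step yields \emph{two} premises, the $\phi^u$ premise (indeed capped by a rank-$n$ cut) and the $\ldia\lF\phi$ premise, on which the zip continues; the latter is precisely the branch tracking the cyclic trace on the cut formula, and it is not finitized. So right-side repeats must in general be unfolded rather than discharged, your measure then increases, and the procedure as described either fails to terminate or inserts invalid discharges.

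What the paper does instead is drive the discharge by the \emph{left} proof $\pi_0$: the only case that creates a repeat in $\rho$ is when $\Ru_l$ is \RuDischarge, and there one checks directly whether $\rho$ has an ancestor $c$ of the open leaf with the same sequent and a successful path down to the leaf; if so, insert a fresh \RuDischarge[\dz] at $c$ and discharge, and if not, unfold $\pi_l$ (right-side \RuDischarge rules are always unfolded). Success of such paths is inherited from $\pi_0$, whose foci all live in the antecedent $\Gamma_l$ and survive the zip untouched, so focused \RuFL steps of $\pi_0$ project to focused \RuFL steps of $\rho$. Termination is then not a structural induction on either premise but a quantitative argument: every $n_1$ steps the right side must cross a \RuDia (Lemma \ref{lem.guarded}, Lemmas \ref{lem.pathInCluster} and \ref{lem.pathSuccesful}), which forces the left projection $\tau_0$ to make progress; long paths inside clusters of $\pi_0$ contain many companion nodes with successful paths between any two of them; and a pigeonhole on the at most $n_0 \cdot n_1$ sequents occurring below open leaves produces an ancestor with equal sequent and successful connecting path, bounding the height of open leaves by $4 \cdot n_0^2 \cdot n_1^2$. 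Without this left-driven mechanism, your construction has no sound way of ever closing the $\ldia\lF\phi$-branches, which is the heart of the lemma.
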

\begin{proof}
	Let $\rho_I$ be the $\lF\phi$-traversed proof of $\Sigma_0,\Sigma _1 \seq \Pi_0,\Pi_1$ consisting of an open leaf labelled by $\Sigma_0,\Sigma _1 \seq \Pi_0,\Pi_1$ together with $\merge{{\pi}_0}{\psi}{{\pi}_1}$, this we will denote by
	\begin{align*}
		\begin{prooftree}
			\hypo{\merge{{\pi}_0}{\psi}{{\pi}_1}}
			\infer[no rule]1{\Sigma_0,\Sigma_1 \seq \Pi_0,\Pi_1}
		\end{prooftree}
	\end{align*}
	Starting from $\rho_I$ we will inductively transform the traversed proof, until we end up with a traversed proof of $\Sigma_0,\Sigma_1 \seq \Pi_0,\Pi_1$, where all leaves are closed. This will be done as follows. Let $\rho$ be a traversed proof. If all leaves are closed, we are done. Otherwise, consider the leftmost open leaf $v$ labelled by 
	\begin{align*}
		\begin{prooftree}
			\hypo{\merge{\pi_l}{\psi}{\pi_r}}
			\infer[no rule]1{\Gamma_l,\Gamma_r \seq \Delta_l,\Delta_r}
		\end{prooftree}
	\end{align*}
	Let the roots of $\pi_l$ and $\pi_r$ be labelled by the rules $\Ru_l$ and $\Ru_r$, respectively. We transform $\rho$ by a case distinction on $\Ru_l$ and $\Ru_r$. We only show the crucial cases, the full proof can be found in Appendix \ref{sec.appendixImportantCuts}.
	\begin{itemize}	
		\item If $\Ru_l$ is \RuDischarge, then $\pi_l$ has the form
		\begin{align*}
			\begin{prooftree}
				\hypo{\pi_l'}
				\infer[no rule]1{\Gamma_l \seq \Delta_l, \psi}
				\infer[left label= \RuDischarge]1{\Gamma_l \seq \Delta_l,\psi}
			\end{prooftree}
		\end{align*}
		We proceed by making the following case distinction.
		\begin{itemize}
			\item If there is a node $c$ in $\rho$, that is an ancestor of $v$ and labelled by $\Gamma_l,\Gamma_r \seq \Delta_l,\Delta_r$, such that the path from $c$ to $v$ is successful, then insert a \RuDischarge[\dz] rule at $c$ and let $v$ be the discharged leaf $[\Gamma_l,\Gamma_r \seq \Delta_l,\Delta_r]^\dz$ with fresh discharge token $\dz$. 
			
			\item Else, unfold $\pi_l$, i.e. let $\tilde{\pi}_l$ be the proof obtained from $\pi_l'$ by replacing every discharged leaf labelled by $\dx$ with $\pi_l$.
			We replace $v$ by
			\begin{align*}
				\begin{prooftree}
					\hypo{\merge{\tilde{\pi}_l}{\psi}{\pi_r}}
					\infer[no rule]1{\Gamma_l,\Gamma_r \seq \Delta_l,\Delta_r}
				\end{prooftree}
			\end{align*}
		\end{itemize}
	
		\item If $\Ru_r$ is \RuDischarge[\dy], then unfold $\pi_r$.

		\item If $\Ru_l$ is $\RuFR$ and $\Ru_r$ is $\RuFL$, and $\pi_l$ and $\pi_r$ are of the forms
			\begin{align*}
					\begin{prooftree}
					\hypo{\pi_l'}
					\infer[no rule]1{\Gamma_l \seq \Delta_l, \phi, \ldia\lF\phi}
					\infer[left label= \RuFR]1{\Gamma_l \seq \Delta_l,\lF\phi}
				\end{prooftree}
				&\qquad
					\begin{prooftree}
					\hypo{\pi_r'}
					\infer[no rule]1{\ldia \lF \phi^a, \Gamma_r \seq \Delta_r}
					\hypo{\pi_r^u}
					\infer[no rule]1{\phi^u, \Gamma_r \seq \Delta_r}
					\infer[left label= \RuFL]2{\lF\phi^a,\Gamma_r \seq  \Delta_r}
				\end{prooftree}
			\end{align*}
			then $v$ is replaced by
			\begin{align*}
				\begin{prooftree}
					\hypo{\merge{\pi_l'}{\ldia\lF\phi}{\pi_r'}}
					\infer[no rule]1{\Gamma_l,\Gamma_r \seq \Delta_l,\Delta_r,\phi}
					\hypo{\pi_r^u}
					\infer[no rule]1{\phi^u, \Gamma_r \seq \Delta_r}
					\infer[left label= \RuCut]2{\Gamma_l,\Gamma_r \seq \Delta_l,\Delta_r}
				\end{prooftree}
			\end{align*}
			where the introduced cut has rank $n$.

		\item If $\Ru_l$ is \RuU or \RuF of the form 
		\begin{align*}
			\begin{prooftree}
				\hypo{\pi_l'}
				\infer[no rule]1{\Gamma_l' \seq \Delta_l, \psi}
				\infer[left label= $\Ru_l$]1{\Gamma_l \seq \Delta_l, \psi}
			\end{prooftree}
		&
	\qquad\qquad\text{then $v$ is replaced by} 
		&
			\begin{prooftree}
				\hypo{\merge{\pi_l'}{\psi}{\pi_r}}
				\infer[no rule]1{\Gamma_l',\Gamma_r \seq \Delta_l,\Delta_r}
				\infer[left label= $\Ru_l$]1{\Gamma_l,\Gamma_r \seq \Delta_l,\Delta_r}
			\end{prooftree}
	\end{align*}
		
		\item If $\Ru_r$ is $\RuU$ or $\RuF$ of the form
		\begin{align*}
			\begin{prooftree}
				\hypo{\pi_r'}
				\infer[no rule]1{\psi^b, \Gamma_l' \seq \Delta_l}
				\infer[left label= $\Ru_r$]1{\psi^a, \Gamma_l \seq \Delta_l}
			\end{prooftree}
		&
			\qquad\qquad\text{then $v$ is replaced by} 
		&
			\begin{prooftree}
				\hypo{\merge{\pi_l}{\psi}{\pi_r'}}
				\infer[no rule]1{\Gamma_l,\Gamma_r \seq \Delta_l,\Delta_r}
			\end{prooftree}
		\end{align*}

		\item If $\Ru_l$ or $\Ru_r$ is any non-principal rule, we push the open leaf `up', similar to how a cut is pushed away from the root.
		
	\end{itemize}
	
	Let $\rho_i$ and $\rho_j$ be traversed proofs. We write $\rho_i < \rho_j$, if $\rho_j$  can be obtained from $\rho_i$ by the above construction and $\rho_i \neq \rho_j$. 
	If $\rho_i < \rho_j$, then $\rho_i$ is a subproof of $\rho_j$, in the sense that $\rho_j$ can be obtained from $\rho_i$ by replacing some open leaves in $\rho_i$ by traversed proofs and inserting nodes labelled by \RuDischarge[]. Thus, $\rho_j$ consists of at least the nodes in $\rho_i$ and we can identify nodes in $\rho_i$ with nodes in $\rho_j$.
	
	\noindent
	Let $\rho > \rho_I$ and let $v$ be an open leaf in $\rho$ labelled by
	\begin{align*}
		\begin{prooftree}
			\hypo{\merge{\pi_l}{\psi}{\pi_r}}
			\infer[no rule]1{\Gamma_l,\Gamma_r \seq \Delta_l,\Delta_r}
		\end{prooftree}
	\end{align*}
	We can define a node $u_0(v) \in \pi_0$ with $\pi_l = \pi_{u_0(v)}$, $\sfS(u_0(v)) = \Gamma_l \seq \Delta_l, \psi$ and an analogous node $u_1(v) \in \pi_1$. These definitions extend to nodes $w$ below an open leaf $v$. Hence, nodes below an open leaf can be labelled by only finitely many distinct sequents.
	
	Let $\tau$ be the path from the root of $\rho$ to the open leaf $v$. Using the above definitions of $u_0(w)$ and $u_1(w)$ we obtain corresponding paths $\tau_0$ in $\pi_0$ and $\tau_1$ in $\pi_1$.

	Combining these observations, we can find a bound on the height of open leaves: As there are only finitely many distinct sequents on $\tau$, at some point an open leaf $v$ is reached, such that an ancestor $c$ in $\rho$ is labelled by the same sequent. The path $\tau(v)$ from $c$ to $v$ in $\rho$ corresponds to a path $\tau_0(v)$ in $\pi_0$. If $\tau(v)$ is of some certain length the path $\tau_0(v)$ is successful. This implies that the path $\tau(v)$ is successful as well and the leaf $v$ will be closed. A more detailed proof with explicit bounds can be found in Appendix \ref{sec.appendixImportantCuts}.
	
	In conclusion, as every constructed tree is finitely branching, after finitely many steps, a traversed proof $\rho_T$ without open leaves is constructed.	
	As every traversed proof without open leaves is a \GKe proof, we are done.
\end{proof}

\begin{example}
	Consider the following proof $\pi$ containing a critical cut
	\begin{align*}
		\begin{prooftree}
			\hypo{\pi_0}
			\infer[no rule]1{\lF\ldia\ldia p^u \seq \lF \ldia p}
			\hypo{\pi_1}
			\infer[no rule]1{\lF\ldia p^u \seq \lF p}
			\infer[left label= \RuCut]2{\lF\ldia\ldia p^u \seq \lF p}
		\end{prooftree}
	\end{align*}
	where the two proofs $\pi_0$ and $\pi_1$ are
	\begin{align*}
		\begin{minipage}{0.54\textwidth}
			\begin{prooftree}
				\hypo{[\lF\ldia\ldia p^f \seq \lF \ldia p]^\dx}
				\infer[left label= \RuDia]1{\ldia\lF\ldia\ldia p^f \seq \ldia\lF \ldia p}
				\hypo{}
				\infer[double]1{\ldia\ldia p^u \seq \ldia\lF \ldia p}
				\infer[left label= \RuFL]2{\lF\ldia\ldia p^f \seq \ldia p, \ldia\lF \ldia p}
				\infer[left label=\RuFR]1{\mathllap{a:\qquad\qquad}\lF\ldia\ldia p^f \seq \lF \ldia p}
				\infer[left label= \RuDischarge[\dx]]1{\lF\ldia\ldia p^f \seq \lF \ldia p}		
				\infer[left label= \RuU]1{\lF\ldia\ldia p^u \seq \lF \ldia p}
			\end{prooftree}
		\end{minipage}
		\begin{minipage}{0.45\textwidth}
			\begin{prooftree}
				\hypo{[\lF\ldia p^f \seq \lF  p]^\dy}
				\infer1[\RuDia]{\ldia\lF\ldia p^f \seq \ldia\lF  p}
				\hypo{}
				\infer[double]1{\ldia p^u \seq \ldia\lF  p}
				\infer[left label= \RuFL]2{\mathllap{b:\qquad\quad}\lF\ldia p^f \seq p, \ldia\lF  p}
				\infer[left label= \RuFR]1{\lF\ldia p^f \seq \lF  p}
				\infer[left label= \RuDischarge[\dy]]1 {\lF\ldia p^f \seq \lF  p}
				\infer[left label= \RuU]1{\lF\ldia p^u \seq \lF p}
			\end{prooftree}
		\end{minipage}
	\end{align*}
	For the sake of readability we apply weakening implicitly and omit \RuWeak rules in this example. The leaves marked by double lines can be proved easily without cuts or repeats.
	
	Following the construction and notation of Lemma \ref{lem.cutsImportant} we define $\rho_I$ as above.
	In the first steps, the transformations for \RuU in $\pi_0$ and $\pi_1$ are applied, then the resulting subproofs $\pi_0'$ and $\pi_1'$ get unfolded. Adding the \RuFR rule in $\pi_1$ to $\rho_I$, results in the following traversed proof:
	\begin{align*}
		\begin{prooftree}
			\hypo{\merge{{\pi}_a}{\lF\ldia p}{{\pi}_b}}
			\infer[no rule]1{\lF\ldia\ldia p^f \seq p, \ldia\lF p}
			\infer[left label= \RuFR]1{\lF\ldia\ldia p^f \seq \lF p}
			\infer[left label= \RuU]1{\lF\ldia\ldia p^u \seq \lF p}
		\end{prooftree}
	\end{align*}
	Now the rules $\Ru_l$ and $\Ru_r$ are \RuFR and \RuFL, respectively. Hence a cut with cut-formula $\ldia p$ is introduced. Then the traversed proof gets further transformed into
	\begin{align*}
		\begin{prooftree}
			\hypo{\merge{{\pi}_0'}{\lF\ldia p}{{\pi}_1'}}
			\infer[no rule]1[]{\mathllap{v:\qquad\qquad}\lF\ldia\ldia p^f \seq \lF p}
			\infer[left label= \RuDia]1{\ldia\lF\ldia\ldia p^f \seq \ldia\lF p}
			\hypo{}
			\infer[double]1{\ldia \ldia p^u, \seq \ldia\lF p}
			\infer[left label= \RuFL]2{\lF\ldia\ldia p^f \seq \ldia p, \ldia\lF p}
			\hypo{}
			\infer[double]1{\ldia p^u \seq \ldia\lF p}
			\infer[left label= \RuCut]2{\lF\ldia\ldia p^f \seq p, \ldia\lF p}
			\infer[left label= \RuFR]1{\mathllap{c:\qquad\qquad}\lF\ldia\ldia p^f \seq \lF p}
			\infer[left label= \RuU]1{\lF\ldia\ldia p^u \seq \lF p}
		\end{prooftree}
	\end{align*}
	The root of $\pi_0'$ is labelled by \RuDischarge and the repeat condition for $v$ is satisfied: The node $c$ is labelled by the same sequent as $v$ and the path from $c$ to $v$ is successful. Thus a \RuDischarge[\dz] rule is inserted (with a fresh discharge token \dz) at $c$ and $v$ is discharged by $\dz$, which results in a \GKe proof.
\end{example}

\subsection{Unimportant cuts}

Recall the set of component nodes of a node $v$ in a derivation, written $\comp(v)$, defined in \autoref{sec:pre}.
	
\begin{definition}
	Let $\pi$ be a \GKea derivation and $v$ be a node in $\pi$. The \emph{infinite unfolding of $\comp(v)$} in $\pi$, written $\unf[v]{\pi}$, is obtained from $\pi$ by recursively replacing every discharged leaf $l$, that is a component descendant of $v$, by $\pi_l$ and removing nodes labelled by \RuDischarge[\dx] whenever no discharged leaf is labelled by $\dx$.
\end{definition}

\begin{lemma}\label{lem.cutsUnimportant}
	Let $\pi$ be a \GKea proof of cut-rank $n$, such that all cuts of rank $n$ are unimportant and in the root-cluster. Then we can transform $\pi$ into a \GKea proof $\pi'$ of the same sequent with cut-rank $\leq n$, where all cuts of rank $n$ are important.
\end{lemma}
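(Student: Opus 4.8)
The plan is to mimic the standard ``push cuts towards the leaves'' strategy, but to carry it out inside the infinite unfolding $\unf[r]{\pi}$ of the root-component (where $r$ is the root), and only afterwards re-regularise. The reason for passing to $\unf[r]{\pi}$ is that the rank-$n$ cuts all sit in the root-cluster, so in $\pi$ itself they are trapped behind back-edges and cannot be pushed up; in $\unf[r]{\pi}$ these back-edges are unwound into infinite branches and there is no obstruction to pushing a cut upward. The key soundness ingredient is Lemma~\ref{lem.unimporantCompDescendent}: since every component descendant of the cut formula of an unimportant rank-$n$ cut is out of focus, none of the cut reductions performed inside the root-component can delete an \RuFL that is in focus, so every successful path---in particular every infinite branch arising from an unwound successful cycle---stays successful. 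Within $\comp(r)$ the cut formula is thus invisible to the global validity condition and the usual reductions apply verbatim.

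Next I would push each rank-$n$ cut upward through $\unf[r]{\pi}$ by the reductions of Appendix~\ref{sec.sub.CutReductions}. Two regimes arise. Along a branch that eventually leaves $\comp(r)$, the cut descends to a node of strictly smaller depth; below the root-component $\pi$ has no rank-$n$ cut by hypothesis, so there the cut meets only lower-rank material and is absorbed by principal reductions (the $\theta$-cut produced by the \RuFR/\RuFL step has rank $n-1$). Along a branch that stays in $\comp(r)$ (an unwound cycle) the cut formula never becomes principal in focus, and by guardedness (Lemma~\ref{lem.guarded}) every passage around the former cycle crosses a \RuDia; the $\ldia$-reduction then strictly decreases the syntactic size of the cut formula, exactly as in the proof of Lemma~\ref{lem.cutsTidy}. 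Hence on each such branch the cut formula shrinks, cannot cycle forever at rank $n$, and the cut is pushed strictly upward, away from the root-cluster.

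Finiteness is recovered by a repeat argument in the spirit of Lemma~\ref{lem.cutsImportant}: below any pushed cut only finitely many annotated sequents can occur, so every branch of the transformed unfolding eventually repeats an ancestor sequent, and after a sufficiently long stretch the intervening path is successful---again using that the cut formula is out of focus, so the focus structure inherited from $\pi$ is preserved. Inserting \RuDischarge[] at these repeats folds the infinite object back into a finite derivation $\pi'$ of the same end-sequent with cut-rank $\le n$. Finally I would re-minimise the focusing of $\pi'$. The point is that the new back-edges are placed so that each surviving rank-$n$ cut lies strictly above the re-formed cycles rather than on them; consequently no rank-$n$ cut sits in a non-trivial cluster of $\pi'$. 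By the minimal-focusing characterisation---a node carrying a formula in focus is either in a non-trivial cluster or is labelled \RuF---the conclusion of every rank-$n$ \RuCut in $\pi'$ is then out of focus, i.e.\ every rank-$n$ cut is important, as required.

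The main obstacle is this last step: guaranteeing that the re-regularised rank-$n$ cuts genuinely end up off all non-trivial clusters, so that the minimal-focusing characterisation forces their conclusions to be focus-free. Here the interaction between the placement of the new \RuDischarge[] nodes and the global validity condition is delicate: one must choose the repeats so that the successful cycles are re-established strictly below the relocated cuts, and verify (via Lemma~\ref{lem.unimporantCompDescendent}) that displacing the out-of-focus cut does not disturb those cycles. Establishing a uniform bound on how far a cut must be pushed before such a benign repeat becomes available---combining the shrinking cut-formula measure with the finite count of reachable sequents---is the crux of the argument.
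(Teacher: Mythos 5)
Your overall architecture is the same as the paper's: unfold the root component, push the unimportant cuts upwards with the Appendix reductions, use Lemma~\ref{lem.unimporantCompDescendent} to see that formulas in focus (and hence successful paths) are untouched, then close branches by a pigeonhole repeat argument and observe that the cuts expelled from the component are important. But two of your key steps do not hold up. First, your termination measure in the ``stays in $\comp(r)$'' regime is wrong: you claim the cut formula strictly shrinks ``exactly as in Lemma~\ref{lem.cutsTidy}'', but that argument only works because Lemma~\ref{lem.cutsTidy} stops as soon as the cut formula has the form $\lF\phi$, i.e.\ before the $\lF$-principal reduction ever fires. Inside the root component nothing prevents a descendant of the (out-of-focus) cut formula from being principal in an \emph{out-of-focus} \RuFL; the \RuFR/\RuFL reduction then replaces the cut formula $\lF\chi$ by the syntactically \emph{larger} $\ldia\lF\chi$, which the \RuDia reduction turns back into $\lF\chi$, and so on: the cut formula can cycle between $\lF\chi$ and $\ldia\lF\chi$ forever, at constant rank and without any monotone decrease in size. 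For the same reason your first regime is also incorrect: a rank-$(n)$ cut pushed out of the component is not ``absorbed'' at rank $n-1$ --- the \RuFR/\RuFL step spawns a lower-rank side cut on $\chi$ but keeps a cut on $\ldia\lF\chi$ of the \emph{same} rank --- it simply becomes important, since by minimal focusing its conclusion is entirely out of focus, and that is all the lemma requires.

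Second, you explicitly leave the crux unresolved (``establishing a uniform bound on how far a cut must be pushed \dots is the crux of the argument''), and this is precisely the piece the paper supplies instead of a shrinking-formula measure. Setting $m = |\Clos(\Gamma\seq\Delta)|$, one pushes all cuts in the unfolded component $A$ upwards only until the prefix $A_0$ of height $8^m+1$ is cut-free; the cut-free prefix does keep growing because on every infinite path of $A$ there are infinitely many in-focus \RuFL rules, which are necessarily non-principal for the out-of-focus cut formula. In a cut-free region every formula lies in $\Clos(\Gamma\seq\Delta)$, so $A_0$ carries at most $8^m$ distinct sequents, pigeonhole gives a repeat on every branch of $A_0$, and since the reductions never disturbed formulas in focus nor removed in-focus \RuFL rules, the paths between these repeats are successful. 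The new cycles thus live entirely inside the cut-free $A_0$, strictly below the expelled important cuts, so regularisation and the importance of all remaining rank-$(n)$ cuts come out of the construction directly rather than requiring the delicate after-the-fact placement of back-edges you appeal to. Without this quantitative bound (or a substitute for it) your procedure has no termination guarantee and the folding-back step cannot be carried out.
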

\begin{proof}
	Let $\Gamma \seq \Delta$ be the sequent at the root $r$ of $\pi$. For formulas $\phi$ and $\psi$, we write $\phi \tracestep \psi$ if 
	$\psi$ is either a direct subformula of $\phi$ or $\phi \equiv \lF \chi$ and $\psi \equiv \ldia \lF \chi$. 		%
	The \emph{closure} $\Clos(\Gamma\seq \Delta)$ of a sequent $\Gamma \seq \Delta$
	is the least superset of $\Gamma \cup \Delta$ that is closed under $\tracestep$.  Let $m = |\Clos(\Gamma \seq \Delta)|$. 

Let $\unf[r]{\pi}$ be the infinite unfolding of $\comp(r)$ of $\pi$ and let $A$ be $\comp(r)$ in $\unf[r]{\pi}$. Note that $A$ is infinite.
	We want to push the cuts (also the ones with cut-rank $< n$) occurring in $A$ upwards in $\unf[r]{\pi}$, until the set of nodes $A_0$ of height $8^m+1$ in $A$ does not contain cuts. 
	
	Choose a root-most node $v$ labelled by \RuCut, take a lowest descendant $v'$ of $v$, that is labelled by \RuCut, such that both premises are not labelled by \RuCut(that could also be $v$ itself). This can always be found as $\pi$ is a proof, hence a repeat cannot consist of \RuCut-rules only. Apply a suitable cut reduction defined in Appendix \ref{sec.sub.CutReductions} to $v'$. At some point this will be a non-principal cut reduction, because $\pi$ is a proof and the cut-formula is out of focus: There are infinitely many \RuFL rules, where the principal formula is in focus on every infinite path in $A$. Thus the height of the cut-free sub-proof increases. Continue this process until $A_0$ is cut-free. All cuts of rank $n$, that where pushed outside of $A$, are important: If the cut is pushed out of the cluster all formulas in the conclusion become out of focus (as $\pi$ is minimally focused) and no cut reduction puts formulas in focus again. Because all cut-reductions defined in Appendix \ref{sec.sub.CutReductions} preserve the cut-rank, all cuts have cut-rank $\leq n$.
	
	As $A_0$ is cut-free, each formula in $A_0$ is in $\Clos(\Gamma \seq \Delta)$, where it could occur or not occur in $\Gamma$ and $\Delta$, and be in focus or out of focus in $\Gamma$. Thus, there are at most $8^m$ many distinct sequents in $A_0$ and on each branch in $A_0$ there is a node $v$ such that an ancestor $c(v)$ of $v$ is labelled by the same sequent. For each such branch choose the root-most such node $v$, insert a \RuDischarge[\dx] rule at $c(v)$ and let $v$ be a discharged leaf labelled by $\dx$ with fresh discharge token $\dx$.
	All sequents in $A_0$ have a formula in focus. Because of Lemma \ref{lem.unimporantCompDescendent} none of the cut-reductions affect formulas in focus,  and \RuFL-rules where the principal formula is in focus, remain.  Hence, the path from $c(v)$ to $v$ is successful and we obtain a proof with cut-rank $n$, where all cuts of rank $n$ are important.
\end{proof}

\begin{example}
	Consider the following proof $\pi$ containing one unimportant cut.
	\begin{align*}
		\begin{prooftree}
			\hypo{[\lF \ldia p^f \seq \lF p]^\dx}
			\infer[left label= \RuDia]1{\ldia\lF \ldia p^f \seq \ldia\lF p}
			\hypo{}
			\infer[double]1{\ldia p^u \seq \ldia\lF p}
			\infer[left label= \RuFL]2{\lF\ldia p^f \seq \ldia \lF p}
			\hypo{}
			\infer[double]1
			{\ldia \lF p^u \seq \lF p}
			\infer[left label= \RuCut]2{\lF \ldia p^f \seq \lF p}
			\infer[left label= \RuDischarge[\dx]]1{\lF \ldia p^f \seq \lF p}
		\end{prooftree}
	\end{align*}
	For the sake of readability, we apply weakening implicitly and omit \RuWeak rules in this example. The leaves marked by double lines are in the next cluster and their (straightforward) proofs will be omitted.	Following the construction described in Lemma \ref{lem.cutsUnimportant} we push the cut upwards using the cut reductions from Appendix \ref{sec.sub.CutReductions}. Here we only unfold the discharged leaf once, as it suffices in this example, in contrast to working with $\unf[r]{\pi}$ as in the proof of Lemma \ref{lem.cutsUnimportant}.
	First, a non-principal cut reduction for \RuFL is applied:
	\begin{align*}
		\begin{prooftree}
			\hypo{\pi}
			\infer[no rule]1{\lF \ldia p^f \seq \lF p}
			\infer[left label= \RuDia]1{\ldia\lF \ldia p^f \seq \ldia\lF p}
			\hypo{}
			\infer[double]1{\ldia \lF p^u \seq p, \ldia\lF p}
			\infer[left label= \RuFR]1{\ldia \lF p^u \seq \lF p}
			\infer[left label= \RuCut]2{\ldia\lF\ldia p^f \seq \lF p}
			\hypo{}
			\infer[double]1{\ldia p^u \seq \ldia \lF p}
			\hypo{}
			\infer[double]1{\ldia\lF p^u \seq \lF p}
			\infer[left label= \RuCut]2{\ldia p^u \seq \lF p}
			\infer[left label= \RuFL]2{\lF \ldia p^f \seq \lF p}			
		\end{prooftree}
	\end{align*}
	The cut in the right subproof is pushed out of the cluster. In the left subproof, principal cut-reductions for \RuFR and \RuDia are applied:
\begin{align*}
	\begin{prooftree}
		\hypo{\pi}
		\infer[no rule]1{\lF \ldia p^f \seq \lF p}
		\infer[]1[\RuDia]{\ldia\lF \ldia p^f \seq \ldia\lF p}
		\hypo{}
		\infer[double]1{\ldia \lF p^u \seq \lF p}
		\infer[left label= \RuCut]2{\mathllap{v:\qquad\qquad}\lF\ldia p^f \seq \lF p}
		\infer[left label=\RuDia]1{\ldia\lF\ldia p^f \seq p,\ldia\lF p}
		\infer[left label= \RuFR]1{\ldia\lF\ldia p^f \seq \lF p}
		\hypo{}
		\infer[double]1{\ldia p^u \seq \ldia \lF p}
		\hypo{}
		\infer[double]1{\ldia\lF p^u \seq \lF p}
		\infer[left label= \RuCut]2{\ldia p^u \seq \lF p}
		\infer[left label= \RuFL]2{\mathllap{c:\qquad}\lF \ldia p^f \seq \lF p}			
	\end{prooftree}
\end{align*}
The node $v$ below the cut is labelled by the same sequent as its ancestor $c$ and the path from $c$ to $v$ is successful. Thus a \RuDischarge[\dy] is inserted at $c$ and $v$ is discharged by $\dy$. This results in a \GKea proof, where the only remaining cut is important.	
\end{example}

\subsection{Main theorem}

We want to prove cut-elimination for $\GKe$ by induction on the cut-rank, hence it suffices to transform any proof $\pi$ of cut-rank $n+1$ to a proof $\pi'$ of cut-rank $n$.

\begin{lemma}[Reduction Lemma]\label{lem.cutReduction}
	Let $\pi$ be a proof of cut-rank $n+1$. Then we can transform $\pi$ into a \GKe proof $\pi'$ of cut-rank $n$ of the same sequent. 
\end{lemma}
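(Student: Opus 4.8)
The plan is to read this lemma as the inductive step of an outer induction on the cut-rank, so that iterating it yields full cut elimination: the task reduces to removing every \RuCut of rank exactly $n+1$ from $\pi$ while introducing no cut of rank $>n$, i.e.\ producing a \GKe proof of cut-rank $n$. The three preceding lemmas provide the atomic operations out of which I would assemble the procedure. Lemma~\ref{lem.cutsImportant} deletes a single \emph{critical} cut of rank $n+1$ sitting at the root of a subproof whose two premises are already of cut-rank $n$; Lemma~\ref{lem.cutsTidy} brings an isolated important rank-$(n+1)$ cut into the critical $\lF\phi$ form that Lemma~\ref{lem.cutsImportant} demands; and Lemma~\ref{lem.cutsUnimportant} converts the \emph{unimportant} (i.e.\ cyclic) rank-$(n+1)$ cuts of a cluster into important ones by pushing them out of the cluster, which by Lemma~\ref{lem.unimporantCompDescendent} leaves every successful path successful. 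The entire content of this lemma is therefore the \emph{orchestration}: applying each operation only where its hypotheses hold.

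I would carry this out by an inner induction on $\depth(r)$, the depth of the root $r$. The first key observation is that no back-edge reaches from a node of depth $<\depth(r)$ into the root-component: if a repeat leaf of smaller depth had its companion in $\comp(r)$, the back-edge together with the tree path from companion to leaf would form a cycle placing the leaf in a cluster of depth $\depth(r)$, contradicting that its depth is smaller. Hence the subproofs rooted at the depth-dropping children of $\comp(r)$ are self-contained \GKe proofs, and the inner induction hypothesis reduces each of them to cut-rank $n$. After grafting them back, every surviving rank-$(n+1)$ cut lies inside the root-component.

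It then remains to clear the root-component. I would first remove the cyclic rank-$(n+1)$ cuts: each non-trivial cluster $S$ of the component still carrying such a cut is made the root-cluster of the subproof $\pi_v$ for the node $v\in S$ closest to the root, the material above $S$ having already been reduced to cut-rank $n$ by the inner hypothesis, so that Lemma~\ref{lem.cutsUnimportant} applies and replaces these cuts by important cuts pushed out of $S$ toward the leaves, i.e.\ into regions of strictly smaller depth where the inner hypothesis again disposes of them. Once no cycle carries a rank-$(n+1)$ cut, only important rank-$(n+1)$ cuts remain, all at trivial clusters; I would eliminate them by selecting one whose premises are already of cut-rank $n$, passing to the subproof rooted at it (in which it is the unique, important, root-level rank-$(n+1)$ cut), applying Lemma~\ref{lem.cutsTidy} to make it critical and then Lemma~\ref{lem.cutsImportant} to delete it, and grafting the resulting cut-rank-$n$ subproof back. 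Iterating removes every rank-$(n+1)$ cut.

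The main obstacle, and where the care is concentrated, is showing that the preconditions of the three lemmas can \emph{always} be arranged and that the process terminates, because important cuts at trivial clusters and the unimportant cuts inside cycles are interleaved and may be mutually entangled through the back-edges. Concretely, one must check that after the cyclic cuts have been pushed out, the unfolding used inside a subproof can only reintroduce cuts of rank $\le n$, so that the premises of the chosen important cut are genuinely clean and Lemma~\ref{lem.cutsImportant} is licensed; and one must exhibit a well-founded measure — lexicographically combining $\depth(r)$, the number of clusters of $\calS_\pi$ containing a rank-$(n+1)$ cut, and the total number of such cuts — that strictly decreases under each of the three operations. The genuinely delicate reasoning about traces, focus, and the global validity condition is already discharged inside Lemmas~\ref{lem.cutsUnimportant} and~\ref{lem.cutsImportant}; the work that remains here is the combinatorial bookkeeping of clusters, depths, and subproof replacement.
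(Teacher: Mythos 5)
Your proposal is correct and follows essentially the paper's own route: the paper likewise proves the lemma purely by orchestrating Lemmas \ref{lem.cutsTidy}, \ref{lem.cutsImportant} and \ref{lem.cutsUnimportant}, picking a topmost subproof whose rank-$(n+1)$ cuts all lie in its root-cluster and splitting into the trivial-cluster (important cut) and non-trivial-cluster (unimportant cuts) cases. The only difference is bookkeeping: where you wrap the argument in an inner induction on $\depth(r)$ with explicit grafting and self-containedness claims, the paper inducts directly on the number of clusters with unimportant rank-$(n+1)$ cuts, with a sub-induction on the number of clusters with important rank-$(n+1)$ cuts.
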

\begin{proof}
	By induction on the number of clusters in $\pi$ with unimportant cuts of rank $n+1$ with a sub-induction on the number of clusters in $\pi$ with important cuts of rank $n+1$. 
	
	Let $\pi_0$ be a subproof of $\pi$, where all cuts with rank $n+1$ are in the root-cluster. If the root-cluster is trivial, there is one important cut at the root of $\pi_0$. Otherwise, all cuts in the root-cluster of $\pi_0$ are unimportant.
	In the first case, Lemma \ref{lem.cutsTidy} transforms $\pi_0$ to $\pi_0'$, where all cuts of rank $n+1$ are critical. Those can then be reduced using Lemma \ref{lem.cutsImportant}, which yields a proof $\pi_1$ with cut-rank $n$.
	In the second case, Lemma \ref{lem.cutsUnimportant} yields a proof $\pi_1$ with cut-rank $n+1$, where all cuts of rank $n+1$ are important. 
	In both cases, we substitute $\pi_0$ by $\pi_1$ in $\pi$ and obtain a proof $\pi'$, where we can apply the induction hypothesis. 
\end{proof}

\begin{theorem}[Cut elimination]\label{thm.cutElimination}
	We can transform every \GKea proof $\pi$ into a cut-free \GKea proof $\pi'$ of the same sequent.
\end{theorem}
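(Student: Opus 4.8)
The plan is to derive the theorem as a short induction on the cut-rank of $\pi$, with the Reduction Lemma (Lemma~\ref{lem.cutReduction}) doing all the real work in the inductive step. Given a \GKe proof $\pi$ of cut-rank $N$, I would apply Lemma~\ref{lem.cutReduction} once to obtain a proof of cut-rank $N-1$ of the same sequent, and then iterate. After $N$ applications the result is a proof whose cut-rank is $0$: every surviving \RuCut has a cut formula in which the operator $\lF$ does not occur.

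It then remains to treat the base case, namely to remove these rank-$0$ cuts and reach a genuinely cut-free proof. The observation I would exploit is that a cut formula of rank $0$ contains no $\lF$, so neither it nor any of its descendants can ever have the form $\lF\psi$ or $\ldia\lF\psi$; hence such formulas are never in focus and are irrelevant to the success condition on paths. Consequently the entire focus/validity apparatus is inert for these cuts, and they can be eliminated by ordinary reductive cut elimination: following the same ``push towards the leaves'' strategy used for unimportant cuts in Lemma~\ref{lem.cutsUnimportant}, I would repeatedly apply the principal reductions of Appendix~\ref{sec.sub.CutReductions}. Since the only size-increasing reduction is the one in which $\lF\phi$ is principal, the syntactic size of an $\lF$-free cut formula never grows, and by Lemma~\ref{lem.guarded} every cycle contains a \RuDia rule across which an $\ldia$-headed cut formula strictly shrinks; thus the cut formula is driven down to an atom and removed against an axiom, and the procedure terminates. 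Because no descendant of the cut formula is ever in focus, no \RuFL with principal formula in focus is ever deleted, so every previously successful path stays successful and the output is a cut-free \GKe proof of the same sequent.

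I expect the theorem itself to present no real obstacle: all the difficulty has already been absorbed into the Reduction Lemma and, beneath it, into Lemmas~\ref{lem.cutsTidy}, \ref{lem.cutsImportant} and \ref{lem.cutsUnimportant}, where the delicate preservation of successful paths under zipping is handled. The only genuinely new point in the present argument is the base-case remark that rank-$0$ cuts are orthogonal to the focus mechanism, which is precisely why the trace issues that dominate the important-cut case do not arise there. The one place demanding a little care is confirming that the inductive scheme inside Lemma~\ref{lem.cutReduction} (an induction on the number of clusters carrying unimportant, respectively important, cuts of the top rank) terminates for each fixed rank before the rank is decremented; this rests on the well-foundedness of $\sccr$ on the clusters of $\pi$, after which the outer induction on cut-rank is immediate.
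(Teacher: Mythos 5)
Your proof takes the same route as the paper---the paper's entire proof of Theorem~\ref{thm.cutElimination} reads ``By induction on the cut-rank of $\pi$ using Lemma~\ref{lem.cutReduction}''---but you go one step further, and that step is not cosmetic. Under the paper's definition of rank (nesting depth of $\lF$), a proof of cut-rank $0$ may still contain cuts, namely on $\lF$-free formulas, and Lemma~\ref{lem.cutReduction} is stated only for proofs of cut-rank $n+1$; so the paper's induction, read literally, stops short of cut-freeness, and your explicit base case closes exactly this gap. Your argument for the base case is sound: since only formulas of the form $\lF\psi$ or $\ldia\lF\psi$ may carry the focus annotation, an $\lF$-free cut formula and all of its descendants are permanently out of focus, hence no reduction ever deletes an \RuFL whose principal formula is in focus and successful paths remain successful (the same consideration that underlies Lemma~\ref{lem.unimporantCompDescendent}); for termination, your combination of ``no reduction increases the size of an $\lF$-free cut formula'', ``the \RuDia reduction strictly decreases it'', and Lemma~\ref{lem.guarded} is precisely the argument the paper itself uses in the proof of Lemma~\ref{lem.cutsTidy}, only run to completion: because the cut formula can never reach the form $\lF\psi$, it keeps shrinking until the cut disappears. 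Two small points of precision: the cut need not literally reach an atom and die against \Ax---it may also be removed against a weakening, or first be decomposed by the $\land$/$\neg$ reductions---and, unlike in Lemma~\ref{lem.cutsUnimportant}, no passage to an infinite unfolding and no re-insertion of \RuDischarge[] rules is needed here, since each rank-$0$ cut is eliminated after finitely many unfoldings of repeats. Your closing remark that the inner double induction of Lemma~\ref{lem.cutReduction} terminates by well-foundedness of $\sccr$ on clusters is also accurate. In short: same decomposition and same key lemma as the paper, but your version supplies a base case the paper leaves implicit, and is the more complete proof of the theorem as stated.
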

\begin{proof}
	By induction on the cut-rank of $\pi$ using Lemma \ref{lem.cutReduction}.
\end{proof}

%+++++++++++++++++++++++++++++++ 
\section{Conclusion}
%+++++++++++++++++++++++++++++++ 
We have presented a syntactic cut-elimination procedure for the cyclic proof system \GKe. To our knowledge, this is the first cut-elimination method that works directly on cyclic proofs without a detour through infinitary proofs. 
We anticipate that the introduced approach generalises to other annotated, cyclic proof systems for fragments of the modal $\mu$-calculus. Two natural candidates are Propositional Dynamic Logic and the alternation-free $\mu$-calculus. The former has a similar modal structure as \MLe; the latter is known to admit a focused cyclic proof system~\cite{Marti2021}. In both cases important and unimportant cuts can be defined as in this paper, allowing an analogous cut elimination process for unimportant cuts. Complications arise in the treatment of important cuts, especially if formulas in focus occur in multiple premises of a rule or multiple formulas in a sequent are in focus. A potential mitigation would be to employ a multi-cut rule.

%\nocite{*}
\bibliographystyle{eptcs}
\bibliography{FICS.bib}

\begin{thebibliography}{10}
\providecommand{\bibitemdeclare}[2]{}
\providecommand{\surnamestart}{}
\providecommand{\surnameend}{}
\providecommand{\urlprefix}{Available at }
\providecommand{\url}[1]{\texttt{#1}}
\providecommand{\href}[2]{\texttt{#2}}
\providecommand{\urlalt}[2]{\href{#1}{#2}}
\providecommand{\doi}[1]{doi:\urlalt{https://doi.org/#1}{#1}}
\providecommand{\eprint}[1]{arXiv:\urlalt{https://arxiv.org/abs/#1}{#1}}
\providecommand{\bibinfo}[2]{#2}

\bibitemdeclare{incollection}{Bradfield2007}
\bibitem{Bradfield2007}
\bibinfo{author}{Julian \surnamestart Bradfield\surnameend} \&
  \bibinfo{author}{Colin \surnamestart Stirling\surnameend}
  (\bibinfo{year}{2007}): \emph{\bibinfo{title}{{Modal} mu-Calculi}}.
\newblock In \bibinfo{editor}{Patrick \surnamestart Blackburn\surnameend},
  \bibinfo{editor}{Johan \surnamestart Van~Benthem\surnameend} \&
  \bibinfo{editor}{Frank \surnamestart Wolter\surnameend}, editors: {\slshape
  \bibinfo{booktitle}{Handbook of {Modal} {Logic}}}, {\slshape
  \bibinfo{series}{Studies in Logic and Practical
  Reasoning}}~\bibinfo{volume}{3}, \bibinfo{publisher}{Elsevier}, pp.
  \bibinfo{pages}{721--756}, \doi{10.1016/S1570-2464(07)80015-2}.

\bibitemdeclare{phdthesis}{Brotherston2006}
\bibitem{Brotherston2006}
\bibinfo{author}{James \surnamestart Brotherston\surnameend}
  (\bibinfo{year}{2006}): \emph{\bibinfo{title}{Sequent Calculus Proof Systems
  for Inductive Definitions}}.
\newblock Ph.D. thesis, \bibinfo{school}{University of Edinburgh}.
\newblock \urlprefix\url{https://era.ed.ac.uk/handle/1842/1458}.

\bibitemdeclare{inproceedings}{Das2018}
\bibitem{Das2018}
\bibinfo{author}{Anupam \surnamestart Das\surnameend} \&
  \bibinfo{author}{Damien \surnamestart Pous\surnameend}
  (\bibinfo{year}{2018}): \emph{\bibinfo{title}{{Non-Wellfounded Proof Theory
  For (Kleene+Action)(Algebras+Lattices)}}}.
\newblock In \bibinfo{editor}{Dan \surnamestart Ghica\surnameend} \&
  \bibinfo{editor}{Achim \surnamestart Jung\surnameend}, editors: {\slshape
  \bibinfo{booktitle}{27th EACSL Annual Conference on Computer Science Logic
  (CSL 2018)}}, {\slshape \bibinfo{series}{Leibniz International Proceedings in
  Informatics (LIPIcs)}} \bibinfo{volume}{119}, \bibinfo{publisher}{Schloss
  Dagstuhl--Leibniz-Zentrum fuer Informatik}, \bibinfo{address}{Dagstuhl,
  Germany}, pp. \bibinfo{pages}{19:1--19:18}, \doi{10.4230/LIPIcs.CSL.2018.19}.
\newblock \urlprefix\url{http://drops.dagstuhl.de/opus/volltexte/2018/9686}.

\bibitemdeclare{inproceedings}{Dekker2023}
\bibitem{Dekker2023}
\bibinfo{author}{Maurice \surnamestart Dekker\surnameend},
  \bibinfo{author}{Johannes \surnamestart Kloibhofer\surnameend},
  \bibinfo{author}{Johannes \surnamestart Marti\surnameend} \&
  \bibinfo{author}{Yde \surnamestart Venema\surnameend} (\bibinfo{year}{2023}):
  \emph{\bibinfo{title}{Proof {Systems} for the {Modal} µ-{Calculus}
  {Obtained} by {Determinizing} {Automata}}}.
\newblock In: {\slshape \bibinfo{booktitle}{{TABLEAUX} 2023, {Prague}, {Czech}
  {Republic}}}, {\slshape \bibinfo{series}{Lecture {Notes} in {Computer}
  {Science}}} \bibinfo{volume}{14278}, \bibinfo{publisher}{Springer}, pp.
  \bibinfo{pages}{242--259}, \doi{10.1007/978-3-031-43513-3_14}.

\bibitemdeclare{inproceedings}{Fortier2013}
\bibitem{Fortier2013}
\bibinfo{author}{J{\'e}r{\^o}me \surnamestart Fortier\surnameend} \&
  \bibinfo{author}{Luigi \surnamestart Santocanale\surnameend}
  (\bibinfo{year}{2013}): \emph{\bibinfo{title}{{Cuts for Circular Proofs:
  Semantics and Cut-elimination}}}.
\newblock In: {\slshape \bibinfo{booktitle}{Computer Science Logic 2013 (CSL
  2013)}}, {\slshape \bibinfo{series}{Leibniz International Proceedings in
  Informatics (LIPIcs)}}~\bibinfo{volume}{23}, pp. \bibinfo{pages}{248--262},
  \doi{10.4230/LIPIcs.CSL.2013.248}.

\bibitemdeclare{phdthesis}{Jungteerapanich2010}
\bibitem{Jungteerapanich2010}
\bibinfo{author}{Natthapong \surnamestart Jungteerapanich\surnameend}
  (\bibinfo{year}{2010}): \emph{\bibinfo{title}{Tableau Systems for the Modal
  $\mu$-Calculus}}.
\newblock Ph.D. thesis, \bibinfo{school}{University of Edinburgh}.
\newblock \urlprefix\url{http://hdl.handle.net/1842/4208}.

\bibitemdeclare{article}{LW23}
\bibitem{LW23}
\bibinfo{author}{Graham~E. \surnamestart Leigh\surnameend} \&
  \bibinfo{author}{Dominik \surnamestart Wehr\surnameend}
  (\bibinfo{year}{2024}): \emph{\bibinfo{title}{From GTC to Image 1: Generating
  reset proof systems from cyclic proof systems}}.
\newblock {\slshape \bibinfo{journal}{Annals of Pure and Applied Logic}}
  \bibinfo{volume}{175}(\bibinfo{number}{10}), p. \bibinfo{pages}{103485},
  \doi{10.1016/j.apal.2024.103485}.

\bibitemdeclare{inproceedings}{Marti2021}
\bibitem{Marti2021}
\bibinfo{author}{Johannes \surnamestart Marti\surnameend} \&
  \bibinfo{author}{Yde \surnamestart Venema\surnameend} (\bibinfo{year}{2021}):
  \emph{\bibinfo{title}{A Focus System for the Alternation-Free
  {\(\mu\)}-Calculus}}.
\newblock In: {\slshape \bibinfo{booktitle}{{TABLEAUX} 2021, Proceedings}},
  \bibinfo{series}{Lecture Notes in Computer Science},
  \bibinfo{publisher}{Springer}, pp. \bibinfo{pages}{371--388},
  \doi{10.1007/978-3-030-86059-2\_22}.

\bibitemdeclare{article}{Niwinski1996}
\bibitem{Niwinski1996}
\bibinfo{author}{Damian \surnamestart Niwinski\surnameend} \&
  \bibinfo{author}{Igor \surnamestart Walukiewicz\surnameend}
  (\bibinfo{year}{1996}): \emph{\bibinfo{title}{Games for the mu-{Calculus}}}.
\newblock {\slshape \bibinfo{journal}{Theor. Comput. Sci.}}
  \bibinfo{volume}{163}(\bibinfo{number}{1\&2}), pp. \bibinfo{pages}{99--116},
  \doi{10.1016/0304-3975(95)00136-0}.

\bibitemdeclare{inproceedings}{Rooduijn2021}
\bibitem{Rooduijn2021}
\bibinfo{author}{Jan \surnamestart Rooduijn\surnameend} (\bibinfo{year}{2021}):
  \emph{\bibinfo{title}{Cyclic {Hypersequent} {Calculi} for {Some} {Modal}
  {Logics} with the {Master} {Modality}}}.
\newblock In: {\slshape \bibinfo{booktitle}{TABLEAUX 2021}},
  \bibinfo{series}{Lecture {Notes} in {Computer} {Science}},
  \bibinfo{publisher}{Springer International Publishing},
  \bibinfo{address}{Cham}, pp. \bibinfo{pages}{354--370},
  \doi{10.1007/978-3-030-86059-2_21}.

\bibitemdeclare{inproceedings}{Santocanale2002}
\bibitem{Santocanale2002}
\bibinfo{author}{Luigi \surnamestart Santocanale\surnameend}
  (\bibinfo{year}{2002}): \emph{\bibinfo{title}{A {Calculus} of {Circular}
  {Proofs} and {Its} {Categorical} {Semantics}}}.
\newblock In: {\slshape \bibinfo{booktitle}{Foundations of {Software} {Science}
  and {Computation} {Structures}}}, \bibinfo{series}{Lecture {Notes} in
  {Computer} {Science}}, \bibinfo{publisher}{Springer},
  \bibinfo{address}{Berlin, Heidelberg}, pp. \bibinfo{pages}{357--371},
  \doi{10.1007/3-540-45931-6_25}.

\bibitemdeclare{inproceedings}{Saurin23}
\bibitem{Saurin23}
\bibinfo{author}{Alexis \surnamestart Saurin\surnameend}
  (\bibinfo{year}{2023}): \emph{\bibinfo{title}{A Linear Perspective on
  Cut-Elimination for Non-wellfounded Sequent Calculi with Least and Greatest
  Fixed-Points}}.
\newblock In: {\slshape \bibinfo{booktitle}{{TABLEAUX} 2023, Prague, Czech
  Republic}}, {\slshape \bibinfo{series}{Lecture Notes in Computer Science}}
  \bibinfo{volume}{14278}, \bibinfo{publisher}{Springer}, pp.
  \bibinfo{pages}{203--222}, \doi{10.1007/978-3-031-43513-3\_12}.

\bibitemdeclare{article}{Savateev2020}
\bibitem{Savateev2020}
\bibinfo{author}{Yury \surnamestart Savateev\surnameend} \&
  \bibinfo{author}{Daniyar \surnamestart Shamkanov\surnameend}
  (\bibinfo{year}{2020}): \emph{\bibinfo{title}{{Non}-{well}-{founded} {Proofs}
  {for} {the} {Grzegorcyk} {Modal} {Logic}}}.
\newblock {\slshape \bibinfo{journal}{The Review of Symbolic Logic}}
  \bibinfo{volume}{14}(\bibinfo{number}{1}), pp. \bibinfo{pages}{22--50},
  \doi{10.1017/s1755020319000510}.

\bibitemdeclare{book}{schutte77}
\bibitem{schutte77}
\bibinfo{author}{Kurt \surnamestart Sch\"utte\surnameend}
  (\bibinfo{year}{1977}): \emph{\bibinfo{title}{Proof Theory}}.
\newblock \bibinfo{publisher}{Springer-Verlag, Berlin, Heidelberg},
\doi{10.1007/978-3-642-66473-1}.

\bibitemdeclare{inproceedings}{Stirling2014}
\bibitem{Stirling2014}
\bibinfo{author}{Colin \surnamestart Stirling\surnameend}
  (\bibinfo{year}{2014}): \emph{\bibinfo{title}{A Tableau Proof System with
  Names for Modal Mu-calculus}}.
\newblock In: {\slshape \bibinfo{booktitle}{HOWARD-60. A Festschrift on the
  Occasion of Howard Barringer's 60th Birthday}}, {\slshape
  \bibinfo{series}{EPiC Series in Computing}}~\bibinfo{volume}{42},
  \bibinfo{publisher}{EasyChair}, pp. \bibinfo{pages}{306--318},
  \doi{10.29007/lwqm}.

\end{thebibliography}

\clearpage
\appendix
\setcounter{theorem}{0}
\renewcommand\thetheorem{\thesection.\arabic{theorem}}

% %+++++++++++++++++++++++++++++++ 
\section{Appendix: Cut Reductions}\label{sec.sub.CutReductions}
% %+++++++++++++++++++++++++++++++ 
For readability we state the cut reductions for a simplified \RuCut rule, where $\Gamma_l = \Gamma_r$ and $\Delta_l = \Delta_r$; this can be generalised in the obvious way. 

% %+++++++++++++++++++++++++++++++  
\subsection{Principal cut reductions}
% %+++++++++++++++++++++++++++++++ 
	%
	\begin{multline*}
		\begin{prooftree}
			\hypo{\pi_0}
			\infer[no rule]1{\Gamma \seq \Delta, \phi, \ldia\lF \phi}
			\infer[left label= \RuFR]1{\Gamma \seq \Delta, \lF \phi}
			\hypo{\pi_1}
			\infer[no rule]1{\ldia \lF \phi^u, \Gamma \seq \Delta}
			\hypo{\pi_2}
			\infer[no rule]1{\phi^u, \Gamma\seq \Delta}
			\infer[left label= \RuFL]2{\lF \phi^u,\Gamma \seq \Delta}
			\infer[left label= \RuCut]2{\Gamma \seq \Delta}
		\end{prooftree}	\\
		\longrightarrow \qquad
		\begin{prooftree}
			\hypo{\pi_0}
			\infer[no rule]1{\Gamma \seq \Delta, \phi, \ldia\lF \phi}
			\hypo{\pi_1}
			\infer[no rule]1{\ldia \lF \phi^u, \Gamma \seq \Delta}
			\infer[left label= \RuCut]2{\Gamma \seq \Delta, \phi}
			\hypo{\pi_2}
			\infer[no rule]1{\phi^u,\Gamma \seq \Delta}
			\infer[left label= \RuCut]2{\Gamma \seq \Delta}
		\end{prooftree}
	\end{multline*}
 \smallskip
 
\[
		\begin{prooftree}
			\hypo{\pi_0}
			\infer[no rule]1{\gamma^a \seq \Delta, \phi}
			\infer[left label= \RuDia]1{\ldia \gamma^a \seq \ldia \Delta, \ldia \phi}
			\hypo{\pi_1}
			\infer[no rule]1{\phi^u \seq \Delta}
			\infer[left label= \RuDia]1{\ldia \phi^u \seq \ldia \Delta}
			\infer[left label= \RuCut]2{\ldia \gamma^a \seq \ldia \Delta}
		\end{prooftree}	 \qquad\longrightarrow \qquad
		\begin{prooftree}
			\hypo{\pi_0}
			\infer[no rule]1{\gamma^a \seq \Delta, \phi}
			\hypo{\pi_1}
			\infer[no rule]1{\phi^u \seq \Delta}
			\infer[left label= \RuCut]2{\gamma^a \seq \Delta}
			\infer[left label= \RuDia]1{\ldia \gamma^a \seq \ldia \Delta}
		\end{prooftree}
	\]
	
	\begin{multline*}
		\begin{prooftree}
			\hypo{\pi_0}
			\infer[no rule]1{\Gamma \seq \Delta, \phi}
			\hypo{\pi_1}
			\infer[no rule]1{\Gamma \seq \Delta, \psi}
			\infer[left label= \RuAndR]2{\Gamma \seq \Delta, \phi \land \psi}
			\hypo{\pi_2}
			\infer[no rule]1{\phi^u, \psi^u, \Gamma\seq \Delta}
			\infer[left label= \RuAndL]1{\phi \land \psi^u,\Gamma \seq \Delta}
			\infer[left label= \RuCut]2{\Gamma \seq \Delta}
		\end{prooftree}	\\
		\longrightarrow \qquad
		\begin{prooftree}
			\hypo{\pi_0}
			\infer[no rule]1{\Gamma \seq \Delta, \phi}
			\hypo{\pi_1}
			\infer[no rule]1{\Gamma \seq \Delta, \psi}
			\hypo{\pi_2}
			\infer[no rule]1{\phi^u, \psi^u, \Gamma \seq \Delta}
			\infer[left label= \RuCut]2{\phi^u, \Gamma \seq \Delta}
			\infer[left label= \RuCut]2{\Gamma \seq \Delta}
		\end{prooftree}
	\end{multline*}
	
\[
		\begin{prooftree}
			\hypo{\pi_0}
			\infer[no rule]1{\phi^u, \Gamma \seq \Delta}
			\infer[left label= \RuNegR]1{\Gamma \seq \Delta, \neg \phi}
			\hypo{\pi_1}
			\infer[no rule]1{\Gamma \seq \Delta, \phi}
			\infer[left label= \RuNegL]1{\neg \phi^u,\Gamma \seq \Delta}
			\infer2[\RuCut]{\Gamma \seq \Delta}
		\end{prooftree}	\qquad \longrightarrow \qquad
		\begin{prooftree}
			\hypo{\pi_1}
			\infer[no rule]1{\Gamma \seq \Delta, \phi}
			\hypo{\pi_0}
			\infer[no rule]1{\phi^u, \Gamma \seq \Delta}
			\infer[left label= \RuCut]2{\Gamma \seq \Delta}
		\end{prooftree}
\]
% %+++++++++++++++++++++++++++++++ 
\subsection{Trivial principal cut reductions}
% %+++++++++++++++++++++++++++++++ 
\[
		\begin{prooftree}
			\hypo{\pi_0}
			\infer[no rule]1{\Gamma \seq \Delta, p}
			\hypo{p^u \seq p}
			\infer[left label= \RuCut]2{\Gamma \seq \Delta, p}
		\end{prooftree}
		\qquad \longrightarrow \qquad
		\begin{prooftree}
			\hypo{\pi_0}
			\infer[no rule]1{\Gamma \seq \Delta,p}
		\end{prooftree}
\]
	
\[
		\begin{prooftree}
			\hypo{p^u \seq p}
			\hypo{\pi_1}
			\infer[no rule]1{p^u, \Gamma \seq \Delta}
			\infer[left label= \RuCut]2{p^u, \Gamma \seq \Delta}
		\end{prooftree}
		\qquad \longrightarrow \qquad
		\begin{prooftree}
			\hypo{\pi_1}
			\infer[no rule]1{p^u, \Gamma \seq \Delta}
		\end{prooftree}
\]
	
\[
		\begin{prooftree}
			\hypo{\pi_0}
			\infer[no rule]1{\Gamma \seq \Delta}
			\infer[left label= \RuWeakR]1{\Gamma \seq \Delta, \phi}
			\hypo{\pi_1}
			\infer[no rule]1{\phi^u,\Gamma \seq \Delta}
			\infer[left label= \RuCut]2{\Gamma \seq \Delta}
		\end{prooftree} \qquad \longrightarrow \qquad
		\begin{prooftree}
			\hypo{\pi_0}
			\infer[no rule]1{\Gamma \seq \Delta}
		\end{prooftree}
\]
	
	\[
		\begin{prooftree}
			\hypo{\pi_0}
			\infer[no rule]1{\Gamma \seq \Delta, \phi}
			\hypo{\pi_1}
			\infer[no rule]1{\Gamma \seq \Delta}
			\infer[left label= \RuWeakL]1{\phi^u,\Gamma \seq \Delta}
			\infer[left label= \RuCut]2{\Gamma \seq \Delta}
		\end{prooftree}	\qquad \longrightarrow \qquad
		\begin{prooftree}
			\hypo{\pi_1}
			\infer[no rule]1{\Gamma \seq \Delta}
		\end{prooftree}
	\]
% %+++++++++++++++++++++++++++++++ 
\subsection{Cut reductions for \RuDischarge[] , \RuU and \RuF}
% %+++++++++++++++++++++++++++++++ 
 We push \RuU and \RuF rules `upwards' away from the root and unfold \RuDischarge[] rules. The presented reductions are analogous, if the right premise of the cut is labelled by \RuDischarge[], \RuU  or \RuF.

	\[
	\begin{prooftree}
		\hypo{\pi_0}
		\infer[no rule]1{\Gamma \seq \Delta, \phi}
		\infer[left label= \RuDischarge]1{\mathllap{v:\qquad\quad}\Gamma\seq \Delta, \phi}
		\hypo{\pi_1}
		\infer[no rule]1{\phi^u, \Gamma\seq \Delta}
		\infer[left label= \RuCut]2{\Gamma \seq \Delta}
	\end{prooftree}
	\qquad \longrightarrow \qquad
	\begin{prooftree}
		\hypo{\pi_0'}
		\infer[no rule]1{\Gamma \seq \Delta, \phi}
		\hypo{\pi_1}
		\infer[no rule]1{\phi^u, \Gamma\seq \Delta}
		\infer[left label= \RuCut]2{\Gamma \seq \Delta}
	\end{prooftree}
	\]
	where $\pi_0'$ is obtained from $\pi_0$ by replacing every discharged leaf labelled by $\dx$ with $\pi_v$, where $v$ is the left premise of the \RuCut rule.\footnote{Here and in the following cut reductions discharge tokens $\dy$ are replaced by fresh discharge tokens, whenever a \RuDischarge[\dy] rule is duplicated.}

\[
		\begin{prooftree}
			\hypo{\pi_0}
			\infer[no rule]1{\Gamma' \seq \Delta, \phi}
			\infer[left label= \RuDischarge]1{\Gamma'\seq \Delta, \phi}
			\infer[left label= \RuU]1{\mathllap{v:\qquad\quad}\Gamma \seq \Delta, \phi}
			\hypo{\pi_1}
			\infer[no rule]1{\phi^u, \Gamma\seq \Delta}
			\infer[left label= \RuCut]2{\Gamma \seq \Delta}
		\end{prooftree}
		\qquad \longrightarrow \qquad
		\begin{prooftree}
			\hypo{\pi_0'}
			\infer[no rule]1{\Gamma \seq \Delta, \phi}
			\hypo{\pi_1}
			\infer[no rule]1{\phi^u, \Gamma\seq \Delta}
			\infer[left label= \RuCut]2{\Gamma \seq \Delta}
		\end{prooftree}
\]
	where $\pi_0'$ is obtained from $\pi_0$ by (i) unfocusing sequents up to \RuDischarge[] rules and leaves labelled by $\dx$ and (ii) replacing every discharged leaf labelled by $\dx$ with subproof $\pi_v$, where $v$ is the left premise of the \RuCut rule.

\[
\begin{prooftree}
	\hypo{\pi_0}
	\infer[no rule]1{\Gamma^u \seq \Delta, \phi}
	\infer[left label= \RuF]1{\Gamma\seq \Delta, \phi}
	\hypo{\pi_1}
	\infer[no rule]1{\phi^u, \Gamma^u\seq \Delta}
	\infer[left label= \RuF]1{\phi^u, \Gamma\seq \Delta}
	\infer2[\RuCut]{\Gamma \seq \Delta}
\end{prooftree}
\qquad \longrightarrow \qquad
\begin{prooftree}
	\hypo{\pi_0}
	\infer[no rule]1{\Gamma^u \seq \Delta, \phi}
	\hypo{\pi_1}
	\infer[no rule]1{\phi^u, \Gamma^u\seq \Delta}
	\infer[left label= \RuCut]2{\Gamma^u \seq \Delta}
	\infer[left label= \RuF]1{\Gamma\seq \Delta}
\end{prooftree}
\]
If the rule at the root of $\pi_1$ is different to $\RuF$ we do the following:	

\[
	\begin{prooftree}
		\hypo{\pi_0}
		\infer[no rule]1{\Gamma' \seq \Delta, \phi}
		\infer[left label= \RuDischarge]1{\Gamma' \seq \Delta, \phi}
		\infer[left label= \RuU]1{\Gamma^u\seq \Delta, \phi}
		\infer[left label=\RuF]1{\Gamma\seq \Delta, \phi}
		\hypo{\pi_1}
		\infer[no rule]1{\phi^u, \Gamma\seq \Delta}
		\infer[left label= \RuCut]2{\Gamma \seq \Delta}
	\end{prooftree}
	\qquad \longrightarrow \qquad
	\begin{prooftree}
		\hypo{\pi_0'}
		\infer[no rule]1{\Gamma^u\seq \Delta,\phi}
		\infer[left label= \RuF]1{\Gamma\seq \Delta, \phi}
		\hypo{\pi_1}
		\infer[no rule]1{\phi^u, \Gamma\seq \Delta}
		\infer[left label= \RuCut]2{\Gamma \seq \Delta}
	\end{prooftree}
	\]
	where $\pi'_0$ is defined as above. For a rule \Ru different from \RuU we proceed as follows.	
	\begin{multline*}
		\begin{prooftree}
			\hypo{\pi_1}
			\infer[no rule]1{\Gamma_1 \seq \Delta_1, \phi}
			\hypo{\cdots}
			\hypo{\pi_n}
			\infer[no rule]1{\Gamma_n \seq \Delta_n, \phi}
			\infer[left label= \Ru]3{\Gamma \seq \Delta, \phi}
			\infer[left label= \RuF]1{\Gamma\seq \Delta, \phi}
			\hypo{\pi_0}
			\infer[no rule]1{\phi^u, \Gamma\seq \Delta}
			\infer[left label= \RuCut]2{\Gamma \seq \Delta}
		\end{prooftree} \\
		 \longrightarrow \qquad
		\begin{prooftree}
			\hypo{\pi_1}
			\infer[no rule]1{\Gamma_1 \seq \Delta_1, \phi}
			\infer[left label= \RuF]1{\Gamma_1'\seq \Delta_1, \phi}
			\hypo{\cdots}
			\hypo{\pi_n}
			\infer[no rule]1{\Gamma_n \seq \Delta_n, \phi}
			\infer[left label= \RuF]1{\Gamma_n' \seq \Delta_n, \phi}
			\infer[left label= \Ru]3{\Gamma \seq \Delta, \phi}
			\hypo{\pi_0}
			\infer[no rule]1{\phi^u, \Gamma\seq \Delta}
			\infer[left label= \RuCut]2{\Gamma \seq \Delta}
		\end{prooftree}
	\end{multline*}
% %+++++++++++++++++++++++++++++++ 
\subsection{Non-principal cut-reductions for a rule $\Ru$ different from \RuDia, \RuU, \RuF and \RuDischarge[]}
% %+++++++++++++++++++++++++++++++ 
	\begin{multline*}
		\begin{prooftree}
			\hypo{\pi_0}
			\infer[no rule]1{\Gamma\seq \Delta, \phi}
			\hypo{\pi_1}
			\infer[no rule]1{\phi^u,\Gamma_1 \seq \Delta_1 }
			\hypo{\cdots}
			\hypo{\pi_n}
			\infer[no rule]1{\phi^u,\Gamma_n \seq \Delta_n}
			\infer[left label= \Ru]3{\phi^u,\Gamma \seq \Delta}
			\infer[left label= \RuCut]2{\Gamma \seq \Delta}
		\end{prooftree}	\\
		\longrightarrow \qquad
		\begin{prooftree}
			\hypo{\pi_0}
			\infer[no rule]1{\Gamma\seq \Delta, \phi}
			\hypo{\pi_1}
			\infer[no rule]1{\phi^u,\Gamma_1 \seq \Delta_1 }
			\infer[left label= \RuCut]2{\Gamma_1, \Gamma \seq \Delta_1, \Delta}
			\hypo{\cdots}
			\hypo{\pi_0}
			\infer[no rule]1{\Gamma\seq \Delta, \phi}
			\hypo{\pi_n}
			\infer[no rule]1{\phi^u,\Gamma_n \seq \Delta_n}
			\infer[left label= \RuCut]2{\Gamma_n, \Gamma \seq \Delta_n, \Delta}
			\infer[left label= \Ru]3{\Gamma \seq \Delta}
		\end{prooftree}
	\end{multline*}
	
	\begin{multline*}
		\begin{prooftree}
			\hypo{\pi_1}
			\infer[no rule]1{\Gamma_1 \seq \Delta_1, \phi}
			\hypo{\cdots}
			\hypo{\pi_n}
			\infer[no rule]1{\Gamma_n \seq \Delta_n, \phi}
			\infer[left label= \Ru]3{\Gamma \seq \Delta, \phi}
			\hypo{\pi_0}
			\infer[no rule]1{\phi^u, \Gamma\seq \Delta}
			\infer[left label= \RuCut]2{\Gamma \seq \Delta}
		\end{prooftree}	\\
		\longrightarrow \qquad
		\begin{prooftree}
			\hypo{\pi_1}
			\infer[no rule]1{\Gamma_1 \seq \Delta_1, \phi}
			\hypo{\pi_0}
			\infer[no rule]1{\phi^u, \Gamma\seq \Delta}
			\infer[left label= \RuCut]2{\Gamma_1, \Gamma \seq \Delta_1, \Delta}
			\hypo{\cdots}
			\hypo{\pi_n}
			\infer[no rule]1{\Gamma_n \seq \Delta_n, \phi}
			\hypo{\pi_0}
			\infer[no rule]1{\phi^u, \Gamma\seq \Delta}
			\infer[left label= \RuCut]2{\Gamma_n, \Gamma \seq \Delta_n, \Delta}
			\infer[left label= \Ru]3{\Gamma \seq \Delta}
		\end{prooftree}
	\end{multline*}
	Note, the rule \Ru in this case may also be an instance of \RuCut.
% \end{itemize}

% %+++++++++++++++++++++++++++++++ 
\section{Appendix: Important Cuts}\label{sec.appendixImportantCuts}
% %+++++++++++++++++++++++++++++++ 
The \emph{size} of a \GKe proof $\pi$ is the number of nodes in $\pi$.
For a path $\tau$ in $\pi$ we call a path $\tau'$ a \emph{subpath} of $\tau$, if $\tau$ can be written as $\tau = \sigma_0 \tau \sigma_1$ for some paths $\sigma_0,\sigma_1$ in $\pi$.

\begin{lemma}\label{lem.pathInCluster}
	Let $\tau$ be a path through a \GKe proof $\pi$ of size $n$. If $l(\tau) \geq k \cdot n$, then there is a cluster $S$ of $\pi$ of size $n_S$ such that there is a subpath $\tau'$ of $\tau$ in $S$ with $l(\tau') \geq k \cdot n_S$.
\end{lemma}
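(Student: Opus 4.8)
The plan is to exploit two facts about the proof tree with back edges $\cyclicPT$: its clusters (strongly connected components) partition its nodes, and its condensation is acyclic, so a path can never re-enter a cluster once it has left it. Together these reduce the statement to a pigeonhole count, and the structural observation is really the only thing that needs care.

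First I would make that observation precise. The clusters of $\pi$ partition the nodes of $\cyclicPT$, and the condensation — the quotient digraph whose vertices are the clusters — is acyclic; for the non-trivial clusters this is exactly the assertion already recorded in the paper that $\sccr$ is irreflexive, transitive and antisymmetric, while trivial clusters are singletons carrying no cycle. Consequently, as one traverses $\tau$ from source to target, each edge that leaves a cluster $S$ must enter a cluster strictly below $S$ in the condensation order, and no later edge can return to $S$ without creating a directed cycle among clusters. Hence for every cluster $S$ the node-occurrences of $\tau$ lying in $S$ form a single contiguous block; call this subpath $\tau_S$ (empty when $\tau$ avoids $S$).

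Next I would run the count. Reading $l(\cdot)$ as the number of node-occurrences, every occurrence on $\tau$ lies in exactly one cluster, so $l(\tau) = \sum_S l(\tau_S)$, the sum ranging over the clusters met by $\tau$. Assume toward a contradiction that $l(\tau_S) < k\cdot n_S$ for each such $S$, where $n_S = |S|$. Since distinct clusters are disjoint subsets of the $n$ nodes of $\pi$, we have $\sum_S n_S \le n$, whence $l(\tau) = \sum_S l(\tau_S) < k\sum_S n_S \le k\cdot n$, contradicting the hypothesis $l(\tau) \ge k\cdot n$. Therefore some cluster $S$ satisfies $l(\tau_S) \ge k\cdot n_S$, and $\tau' = \tau_S$ is the required subpath confined to $S$.

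The only genuine obstacle is the contiguity claim — that $\tau$ never returns to a cluster it has left — and it rests entirely on the acyclicity of the condensation; everything after it is a one-line estimate. I would also add the remark that for $k \ge 2$ the witnessing cluster is automatically non-trivial, since a trivial cluster contributes a block of length $1 < k = k\cdot n_S$ (no back edge forms a self-loop, as a companion is a proper ancestor separated from its repeat leaf by a \RuDia rule, cf.\ Lemma~\ref{lem.guarded}). This matches the intended application, where a long subpath trapped inside one necessarily non-trivial cluster is what later yields a successful repeat.
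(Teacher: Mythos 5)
Your proof is correct and is essentially the paper's own argument: the paper likewise decomposes the proof into clusters, notes $|\pi| = \sum_{j} |S_j|$, and uses the fact that for distinct clusters there is no path in both directions (your acyclicity-of-the-condensation, hence contiguity, claim) to conclude by the same pigeonhole estimate. Your added remark that the witnessing cluster is non-trivial for $k \ge 2$ is a harmless (and for the intended application, pertinent) extra.
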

\begin{proof}
	Let $S_1,...,S_m$ be the clusters of $\pi$. Then $|\pi| = \sum_{j=1}^{m} |S_j|$. The lemma follows, as for $i\neq j$ there is either no path from $S_i$ to $S_j$ or no path from $S_j$ to $S_i$.
\end{proof}

\begin{lemma}\label{lem.pathSuccesful}
	Let $S$ be a non-trivial cluster of size $n$ in a \GKe proof $\pi$ and $\tau = v_1v_2...$ be a path $\tau$ in $S$. Let $l(\tau)$ be the \emph{length} of the path $\tau$. Then,
	\begin{enumerate}
		\item if $l(\tau) \geq n$, then $\tau$ is successful, \label{lem.pathSucc.Succ}
		\item if $l(\tau) \geq n$, then there is a companion node on $\tau$, \label{lem.pathSucc.Comp}
		\item if $l(\tau) \geq n$, then there is a node labelled by \RuDia on $\tau$, \label{lem.pathSucc.Dia}
		\item if $l(\tau) \geq 2\cdot k\cdot n$ then there are $k$ indices $1 \leq m_1 < \cdots < m_k \leq 2\cdot k\cdot n$ such that $v_{m_1},...,v_{m_k}$ are companion nodes and the paths $v_{m_i}v_{m_i+1}...v_{m_j}$ are successful for all $i<j$. \label{lem.pathSucc.Kmany}
	\end{enumerate} 
\end{lemma}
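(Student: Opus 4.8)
The plan is to prove items~(\ref{lem.pathSucc.Succ})--(\ref{lem.pathSucc.Dia}) simultaneously by locating, inside any sufficiently long path in $S$, a \emph{complete} companion-to-leaf path $\tau_{l}$, and then to obtain item~(\ref{lem.pathSucc.Kmany}) from the first three by a counting argument. Two elementary facts underlie everything. First, every node of a non-trivial cluster lies on the path $\tau_{l}$ of some discharged leaf $l$ to its companion, so every sequent occurring in $S$ carries a formula in focus; this already secures the first clause of successfulness for \emph{any} path running inside $S$. Secondly, since $|S| = n$, a path $\tau$ in $S$ of length at least $n$ must revisit a node and hence contain a cycle $C$ of $\cyclicPT$; as the tree part is acyclic, $C$ must traverse at least one back-edge, and the target of a back-edge is a companion node on $\tau$, which proves item~(\ref{lem.pathSucc.Comp}).

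The heart of the argument is the analysis of $C$. Every edge of $\cyclicPT$ is either a tree-edge (going up, from a conclusion to a premise) or a back-edge $l \mapsto c(l)$, so $C$ decomposes as an alternation of maximal tree-ascents and back-edges,
\[
c_0 \;\rightsquigarrow\; l_1 \;\mapsto\; c_1 \;\rightsquigarrow\; l_2 \;\mapsto\; \cdots \;\mapsto\; c_t = c_0,
\]
where each $l_k$ is a repeat leaf, $c_k = c(l_k)$ is its companion, and the $k$-th ascent runs in the tree from $c_{k-1}$ down to $l_k$. Writing $h(\cdot)$ for tree-depth, both $c_{k-1}$ and $c_k$ are ancestors of $l_k$, hence tree-comparable. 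If $h(c_{k-1}) \le h(c_k)$ for some $k$, then $c_{k-1}$ is a weak ancestor of $c_k$, so the ascent from $c_{k-1}$ to $l_k$ passes through $c_k$ and therefore contains the whole path $\tau_{l_k}$ from $c_k$ to $l_k$ as a subpath. The only alternative is $h(c_{k-1}) > h(c_k)$ for every $k$, which forces $h(c_0) > h(c_1) > \cdots > h(c_t) = h(c_0)$, a contradiction. Thus some $\tau_{l_k}$ is a subpath of $C$, and hence of $\tau$. Since $l_k$ is a discharged leaf, $\tau_{l_k}$ is successful and so passes through an application of \RuFL with its principal formula in focus; together with the first fact this gives item~(\ref{lem.pathSucc.Succ}), and Lemma~\ref{lem.guarded} places a \RuDia node on $\tau_{l_k}$, giving item~(\ref{lem.pathSucc.Dia}).

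Item~(\ref{lem.pathSucc.Kmany}) then follows by slicing. I would split the first $2kn$ steps of $\tau$ into $k$ consecutive windows, each long enough (length $\ge n$) to contain a companion node by item~(\ref{lem.pathSucc.Comp}); choosing one companion $v_{m_i}$ per window and spacing the windows by $2n$ guarantees $m_1 < \cdots < m_k \le 2kn$ with $m_{i+1} - m_i \ge n$. Each consecutive block $v_{m_i} \cdots v_{m_{i+1}}$ then has length $\ge n$ and is successful by item~(\ref{lem.pathSucc.Succ}); any longer block $v_{m_i} \cdots v_{m_j}$ contains such a block and consists only of sequents in $S$, so it too satisfies both clauses and is successful.

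The main obstacle is the cycle-decomposition step: one must argue carefully that the cycle $C$, which may weave through several back-edges, genuinely contains a \emph{complete} companion-to-leaf path, and the descending-depth argument on the companions $c_k$ is exactly what rules out the pathological weaving. The remaining points---the precise edge-versus-node accounting behind ``length $\ge n$ forces a repeated node'', and the constants in the windowing for item~(\ref{lem.pathSucc.Kmany})---are routine bookkeeping.
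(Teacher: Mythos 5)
Your proof is correct and takes essentially the same route as the paper: pigeonhole a repeated node on the long path, extract a complete companion-to-discharged-leaf path $\tau_{l}$ lying inside $\tau$ (whose successfulness yields the \RuFL node, whose back-edge target yields the companion, and which yields the \RuDia node via Lemma~\ref{lem.guarded}), and then obtain item~\ref{lem.pathSucc.Kmany} by windowing items~\ref{lem.pathSucc.Succ} and~\ref{lem.pathSucc.Comp}. The only difference is one of rigor: where the paper merely asserts that a revisited node forces some complete path $\tau(u)$ to be contained in $\tau$, you actually prove this via the cycle decomposition into tree-ascents and back-edges together with the descending-height argument on companions, which is a sound and welcome fleshing-out of the step the paper leaves implicit.
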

\begin{proof}
	Every node of $\tau$ is in $S$, hence it has a formula in focus. Thus for \ref{lem.pathSucc.Succ} we only have to show that $\tau$ passes through an application of \RuFL, where the principal formula is in focus.
	If every node of $S$ is visited, then \ref{lem.pathSucc.Succ}, \ref{lem.pathSucc.Comp} and \ref{lem.pathSucc.Dia} are satisfied. Otherwise, there has to be a node that is visited twice. This is only possible if there is a discharged leaf $u$ such that every node of $\tau(u)$ is in $\tau$. Thus \ref{lem.pathSucc.Succ}, \ref{lem.pathSucc.Comp} and \ref{lem.pathSucc.Dia} are satisfied. \ref{lem.pathSucc.Kmany} follows by combining \ref{lem.pathSucc.Comp} and \ref{lem.pathSucc.Succ}.
\end{proof}

We now give a detailed proof of Lemma \ref{lem.cutsImportant}.
\setcounter{theorem}{\getrefnumber{lem.cutsImportant}}
\addtocounter{theorem}{-1}
\renewcommand\thetheorem{\arabic{theorem}}
\begin{lemma}
	Let $\pi$ be a \GKe proof with a critical cut of rank $n+1$ at the root, i.e. $\pi$ is of the form
	\begin{align*}
		\begin{prooftree}
			\hypo{\pi_0}
			\infer[no rule]1{\Sigma_0 \seq \Pi_0, \lF \phi}
			\hypo{\pi_1}
			\infer[no rule]1{\lF \phi^u, \Sigma_1 \seq \Pi_1}
			\infer[left label= \RuCut]2{\Sigma_0, \Sigma_1 \seq \Pi_0, \Pi_1}
		\end{prooftree}
	\end{align*}	
	where $\pi_0$ and $\pi_1$ are of cut-rank $n$, and the cut formula $\lF\phi$ has rank $n+1$. Then we can construct a proof $\pi'$ of $\Sigma_0,\Sigma _1 \seq \Pi_0,\Pi_1$ of cut-rank $n$.
\end{lemma}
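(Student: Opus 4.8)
The plan is to realise the traversed-proof strategy sketched in the main text and to supply the quantitative termination argument it defers to this appendix. I start from the initial $\lF\phi$-traversed proof $\rho_I$, whose unique open leaf carries the pending cut $\merge{\pi_0}{\lF\phi}{\pi_1}$, and repeatedly rewrite the leftmost open leaf by the case analysis of Lemma~\ref{lem.cutsImportant} on the root rules $\Ru_l,\Ru_r$ of the stored proofs. The only rank-relevant step is the principal case for $\lF$, where $\Ru_l=\RuFR$ and $\Ru_r=\RuFL$: there the stored cut on $\lF\phi$ is split into a stored cut on $\ldia\lF\phi$ and a genuine \RuCut whose cut formula is $\phi$, and since $\rank(\phi)=n$ this new genuine cut has rank $n$. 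All other genuine cuts in the eventual output are inherited unchanged from $\pi_0$ or $\pi_1$, hence have rank $\le n$; the stored cuts (on $\lF\phi$ or $\ldia\lF\phi$, of rank $n+1$) never materialise into genuine rank-$(n+1)$ cuts, because the final traversed proof will have no open leaves. Thus, assuming the rewriting terminates, the resulting \GKe proof of $\Sigma_0,\Sigma_1\seq\Pi_0,\Pi_1$ has cut-rank $\le n$.

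The engine of termination is a bookkeeping that attaches to every node lying below an open leaf of a traversed proof $\rho>\rho_I$ a pair of nodes $u_0\in\pi_0$ and $u_1\in\pi_1$, chosen so that the stored proofs at the open leaf above are exactly $\pi_{u_0}$ and $\pi_{u_1}$. Two consequences drive the argument. First, the sequent at such a node is determined by $\sfS(u_0)$ and $\sfS(u_1)$ once the (deleted) descendants of the cut formula are accounted for, so only finitely many distinct sequents occur below any open leaf and the generated trees are finitely branching. Second, a path $\tau$ in $\rho$ from one open leaf up to the next projects to a path $\tau_0$ in $\pi_0$ and a path $\tau_1$ in $\pi_1$. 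The decisive structural observation is that, by the definition of a traversed proof, the right context $\Gamma_r$ is always kept out of focus, so the focused formula of any open-leaf sequent -- if present -- lives in the left component $\Gamma_l$ and is therefore governed entirely by $\tau_0$.

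With this in place I would bound the height at which open leaves can persist. Once the left projection $\tau_0$ is long enough, Lemma~\ref{lem.pathInCluster} extracts a subpath confined to a single cluster $S$ of $\pi_0$ and long relative to $|S|$, whereupon Lemma~\ref{lem.pathSuccesful}(\ref{lem.pathSucc.Succ}) makes that subpath successful and Lemma~\ref{lem.pathSuccesful}(\ref{lem.pathSucc.Comp}) supplies a companion on it; a length of order $|\pi_0|$ suffices, giving the explicit bound. Because only finitely many sequents appear below an open leaf, any sufficiently tall path of $\rho$ also revisits a sequent, yielding an ancestor $c$ of an open leaf $v$ with the same label. Since focus is inherited from the left, successfulness of the left projection of the path from $c$ to $v$ forces the path from $c$ to $v$ in $\rho$ itself to be successful, so the \RuDischarge[] clause closes $v$ against $c$. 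Finite branching together with this uniform height bound then yields, by K\"onig's lemma, a finite traversed proof $\rho_T$ with no open leaves, i.e. the desired \GKe proof of cut-rank $n$.

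I expect the principal difficulty to be precisely this transfer of success from $\tau_0$ to $\rho$, together with the synchronisation of the two projections. The right proof $\pi_1$ may carry its copy of the cut formula in focus and may rely on it for its own cyclic success, but that focus is destroyed when the cut formula is deleted; success of the target path must therefore be witnessed entirely by the left component, which is exactly what keeping $\Gamma_r$ out of focus secures. Checking that the principal case for $\lF$ preserves the focus on the surviving eventuality, and that Lemma~\ref{lem.guarded} forces a \RuDia on each companion-to-leaf path so that $\tau_0$ cannot stall while $\tau_1$ cycles, is where the careful case work is needed; the explicit constants then follow by feeding $|\pi_0|$ and $|\pi_1|$ into Lemmas~\ref{lem.pathInCluster} and~\ref{lem.pathSuccesful}.
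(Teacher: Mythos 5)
Your strategy coincides with the paper's own proof: the same traversed-proof rewriting starting from $\rho_I$, the same projections $u_0(\cdot)\in\pi_0$, $u_1(\cdot)\in\pi_1$, the same observation that any focus in an open-leaf sequent must come from the left component $\Gamma_l$ (since $\Gamma_r$ is kept out of focus), the same rank bookkeeping (only the $\RuFR$/$\RuFL$ case creates a genuine cut, on $\phi$ of rank $n$), and the same use of the \RuDia synchronisation to relate the length of a path in $\rho$ to the length of its projection $\tau_0$.

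However, there is a genuine gap in your termination argument. The \RuDischarge clause can only close an open leaf $v$ at a \emph{repeat-leaf moment}, i.e.\ when the left stored proof $\pi_l$ has root \RuDischarge, and it requires an ancestor $c$ of $v$ carrying the \emph{same} merged sequent such that the specific path from $c$ to $v$ is successful. You establish three ingredients separately: a successful cluster subpath of $\tau_0$ containing \emph{one} companion (via Lemma~\ref{lem.pathSuccesful}.\ref{lem.pathSucc.Succ} and \ref{lem.pathSuccesful}.\ref{lem.pathSucc.Comp}), and, independently, a sequent repetition somewhere on a sufficiently tall path of $\rho$. These facts do not combine into a closure: the repetition may occur between two nodes neither of which is a repeat-leaf moment, and even if $v$ happens to be one, the left projection of the particular path from $c$ to $v$ need not be successful. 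This alignment is exactly what the paper's proof achieves with the stronger Lemma~\ref{lem.pathSuccesful}.\ref{lem.pathSucc.Kmany}, which you never invoke: a long enough cluster subpath of $\tau_0$ yields $2\cdot n_0\cdot n_1$ companion nodes whose pairwise connecting paths are successful; each companion corresponds to a repeat-leaf moment; pigeonhole over the at most $n_0\cdot n_1$ possible open-leaf sequents then produces two repeat leaves $v_i$, $v_j$ with the same sequent and a successful path in between, so the discharge clause fires at $v_j$. The need to accommodate the pigeonhole over $n_0\cdot n_1$ sequents also shows that your claimed height bound ``of order $|\pi_0|$'' is too small; the correct bound, as in the paper, is $4\cdot n_0^2\cdot n_1^2$.
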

\begin{proof}
	Let $\rho_I$ be the $\lF\phi$-traversed proof of $\Sigma_0,\Sigma _1 \seq \Pi_0,\Pi_1$ consisting of an open leaf labelled by $\Sigma_0,\Sigma _1 \seq \Pi_0,\Pi_1$ together with $\merge{{\pi}_0}{\psi}{{\pi}_1}$, this we will denote by
	\begin{align*}
		\begin{prooftree}
			\hypo{\merge{{\pi}_0}{\psi}{{\pi}_1}}
			\infer[no rule]1{\Sigma_0,\Sigma_1 \seq \Pi_0,\Pi_1}
		\end{prooftree}
	\end{align*}
	Starting from $\rho_I$ we will inductively transform the traversed proof, until we end up with a traversed proof of $\Sigma_0,\Sigma_1 \seq \Pi_0,\Pi_1$, where all leaves are closed. This will be done as follows: Let $\rho$ be a traversed proof. If all leaves are closed we are done. Otherwise consider the leftmost open leaf $v$ labelled by 
	\begin{align*}
		\begin{prooftree}
			\hypo{\merge{\pi_l}{\psi}{\pi_r}}
			\infer[no rule]1{\Gamma_l,\Gamma_r \seq \Delta_l,\Delta_r}
		\end{prooftree}
	\end{align*}
	Let the roots of $\pi_l$ and $\pi_r$ be labelled by the rules $\Ru_l$ and $\Ru_r$, respectively. We transform $\rho$ by a case distinction on $\Ru_l$ and $\Ru_r$.
	
	\begin{itemize}	
		\item If $\Ru_l$ is \RuDischarge, then $\pi_l$ has the form
		\begin{align*}
			\begin{prooftree}
				\hypo{\pi_l'}
				\infer[no rule]1{\Gamma_l \seq \Delta_l, \psi}
				\infer1[\RuDischarge]{\Gamma_l \seq \Delta_l,\psi}
			\end{prooftree}
		\end{align*}
				We proceed by making the following case distinction.
		\begin{itemize}
			\item If there is a node $c$ in $\rho$, that is an ancestor of $v$ and labelled by $\Gamma_l,\Gamma_r \seq \Delta_l,\Delta_r$, such that the path from $c$ to $v$ is successful, then insert a \RuDischarge[\dz] rule at $c$ and let $v$ be the discharged leaf $[\Gamma_l,\Gamma_r \seq \Delta_l,\Delta_r]^\dz$ with fresh discharge token $\dz$.			
			\item Else, unfold $\pi_l$, i.e. let $\tilde{\pi}_l$ be the proof obtained from $\pi_l'$ by replacing every discharged leaf labelled by $\dx$ with $\pi_l$.
			We replace $v$ by
			\begin{align*}
				\begin{prooftree}
					\hypo{\merge{\tilde{\pi}_l}{\psi}{\pi_r}}
					\infer[no rule]1{\Gamma_l,\Gamma_r \seq \Delta_l,\Delta_r}
				\end{prooftree}
			\end{align*}
		\end{itemize}

		\item If $\Ru_l$ is \RuFR with principal formula $\lF \phi$ we make a case distinction on $\Ru_r$:
		\begin{itemize}
			\item If $\Ru_r$ is \RuFL with principal formula $\lF \phi$, the proofs $\pi_l$ and $\pi_r$ have the form 
			\begin{align*}
					\begin{prooftree}
						\hypo{\pi_l'}
						\infer[no rule]1{\Gamma_l \seq \Delta_l, \phi, \ldia\lF\phi}
						\infer[left label= \RuFR]1{\Gamma_l \seq \Delta_l,\lF\phi}
					\end{prooftree}
		&\qquad
					\begin{prooftree}
						\hypo{\pi_r'}
						\infer[no rule]1{\ldia \lF \phi^a, \Gamma_r \seq \Delta_r}
						\hypo{\pi_r^u}
						\infer[no rule]1{\phi^u, \Gamma_r \seq \Delta_r}
						\infer[left label= \RuFL]2{\lF\phi^a,\Gamma_r \seq  \Delta_r}
					\end{prooftree}
			\end{align*}
			and $v$ is replaced by
			\begin{align*}
				\begin{prooftree}
					\hypo{\merge{\pi_l'}{\ldia\lF\phi}{\pi_r'}}
					\infer[no rule]1{\Gamma_l,\Gamma_r \seq \Delta_l,\Delta_r,\phi}
					\hypo{\pi_r^u}
					\infer[no rule]1{\phi^u, \Gamma_r \seq \Delta_r}
					\infer[left label= \RuCut]2{\Gamma_l,\Gamma_r \seq \Delta_l,\Delta_r}
				\end{prooftree}
			\end{align*}
			where the introduced cut has rank $\rank(\phi)$.
			
			\item If $\Ru_r$ is $\RuU$ of the form
			\begin{align*}
				\begin{prooftree}
					\hypo{\pi_r'}
					\infer[no rule]1{\psi^a, \Gamma_r' \seq \Delta_r}
					\infer[left label= \RuU]1{\psi^u, \Gamma_r \seq \Delta_r}
				\end{prooftree}
				&
	\qquad\qquad\text{then $v$ is replaced by} 
		&
				\begin{prooftree}
					\hypo{\merge{\pi_l}{\psi}{\pi_r'}}
					\infer[no rule]1{\Gamma_l,\Gamma_r \seq \Delta_l,\Delta_r}
				\end{prooftree}
			\end{align*}
			
			\item If $\Ru_r$ is $\RuF$ of the form
			\begin{align*}
				\begin{prooftree}
					\hypo{\pi_r'}
					\infer[no rule]1{\psi^u, \Gamma_r^u \seq \Delta_r}
					\infer[left label= \RuF]1{\psi^a, \Gamma_r \seq \Delta_r}
				\end{prooftree}
			&
	\qquad\qquad\text{then $v$ is replaced by} 
		&
				\begin{prooftree}
					\hypo{\merge{\pi_l}{\psi}{\pi_r'}}
					\infer[no rule]1{\Gamma_l,\Gamma_r \seq \Delta_l,\Delta_r}
				\end{prooftree}
			\end{align*}
			
			\item If $\Ru_r$ is \RuDischarge[\dy] of the form
			\begin{align*}
				\begin{prooftree}
					\hypo{\pi_r'}
					\infer[no rule]1{\psi^a,\Gamma_r \seq \Delta_r}
					\infer[left label= \RuDischarge[\dy]]1{\psi^a, \Gamma_r \seq \Delta_r}
				\end{prooftree}
			\end{align*}
			then unfold $\pi_r$, i.e. let $\tilde{\pi}_r$ be the proof obtained from $\pi_r'$ by replacing every discharged leaf labelled by $\dy$ with $\pi_r$.\footnote{Discharge tokens $\dz$ are replaced by fresh discharge tokens, whenever a \RuDischarge[\dz] rule is duplicated.}
			We replace $v$ by
			\begin{align*}
				\begin{prooftree}
					\hypo{\merge{\pi_l}{\psi}{\tilde{\pi}_r}}
					\infer[no rule]1{\Gamma_l,\Gamma_r \seq \Delta_l,\Delta_r}
				\end{prooftree}
			\end{align*}

			\item If $\Ru_r$ is \RuWeak of the form
			\begin{align*}
				\begin{prooftree}
					\hypo{\pi_r'}
					\infer[no rule]1{\Gamma_r' \seq \Delta_r}
					\infer[left label= \RuWeak]1{\psi^u, \Gamma_r' \seq \Delta_r}
				\end{prooftree}
			&
	\qquad\qquad\text{then $v$ is replaced by} 
		&
				\begin{prooftree}
					\hypo{\pi_r'}
					\infer[no rule]1{\Gamma_r' \seq \Delta_r}
					\infer[left label= \RuU]1{\Gamma_r \seq \Delta_r}
					\infer[left label= \RuWeak]1{\Gamma_l, \Gamma_r \seq \Delta_l,\Delta_r}
				\end{prooftree}
			\end{align*}

			\item If $\Ru_r$ is any other rule, then the proof has the form 
			\begin{align*}
				\begin{prooftree}
					\hypo{\pi_r^1}
					\infer[no rule]1{\psi^a,\Gamma_r^1 \seq \Delta_r^1}
					\hypo{\hdots}				
					\hypo{\pi_r^n}
					\infer[no rule]1{\psi^a,\Gamma_r^n \seq \Delta_r^n}
					\infer[left label= \Ru]3{\psi^a,\Gamma_r \seq \Delta_r}
				\end{prooftree}
			\end{align*}
			and $v$ is replaced by
			\begin{align*}
				\begin{prooftree}
					\hypo{\merge{\pi_l}{\psi}{\pi_r^1}}
					\infer[no rule]1{\Gamma_l, \Gamma_r^1 \seq \Delta_l,\Delta_r^1}
					\hypo{\hdots}				
					\hypo{\merge{\pi_l}{\psi}{\pi_r^n}}
					\infer[no rule]1{\Gamma_l, \Gamma_r^n \seq \Delta_l,\Delta_r^n}
					\infer[left label= \Ru]3{\Gamma_l, \Gamma_r \seq \Delta_l,\Delta_r}
				\end{prooftree}
			\end{align*}
		\end{itemize}

		\item If $\Ru_l$ is \RuDia, then we make a case distinction on $\Ru_r$:
		\begin{itemize}
			\item If $\Ru_r$ is \RuDia, the proofs have the form 
			\begin{align*}
				\begin{minipage}{0.35\textwidth}
					\begin{prooftree}
						\hypo{\pi_l'}
						\infer[no rule]1{\chi^a \seq \Delta_l, \lF \phi}
						\infer[left label= \RuDia]1{\ldia\chi^a \seq \ldia\Delta_l, \ldia\lF\phi}
					\end{prooftree}
				\end{minipage}
				\begin{minipage}{0.35\textwidth}
					\begin{prooftree}
						\hypo{\pi_r'}
						\infer[no rule]1{\lF\phi^b \seq \Delta_r}
						\infer[left label= \RuDia]1{\ldia\lF\phi^b \seq \ldia\Delta_r}
					\end{prooftree}
				\end{minipage}
			\end{align*}
	and $v$ is replaced by
			\begin{align*}
				\begin{prooftree}
					\hypo{\merge{\pi_l'}{\lF\phi}{\pi_r'}}
					\infer[no rule]1{\chi^a \seq \Delta_l, \Delta_r}
					\infer[left label= \RuDia]1{\ldia\chi^a \seq \ldia\Delta_l,\ldia\Delta_r}
				\end{prooftree}
			\end{align*}
			
			\item If $\Ru_r$ is any other rule, we do the same as in the previous case.
		\end{itemize}
		
			\item If $\Ru_l$ is \RuU of the form 
		\begin{align*}
			\begin{prooftree}
				\hypo{\pi_l'}
				\infer[no rule]1{\Gamma_l' \seq \Delta_l, \psi}
				\infer[left label= \RuU]1{\Gamma_l \seq \Delta_l, \psi}
			\end{prooftree}
			&
	\qquad\qquad\text{then $v$ is replaced by} 
		&
			\begin{prooftree}
				\hypo{\merge{\pi_l'}{\psi}{\pi_r}}
				\infer[no rule]1{\Gamma_l',\Gamma_r \seq \Delta_l,\Delta_r}
				\infer[left label= \RuU]1{\Gamma_l,\Gamma_r \seq \Delta_l,\Delta_r}
			\end{prooftree}
		\end{align*}
		
		\item If $\Ru_l$ is \RuF of the form 
		\begin{align*}
			\begin{prooftree}
				\hypo{\pi_l'}
				\infer[no rule]1{\Gamma_l^u \seq \Delta_l, \psi}
				\infer[left label= \RuF]1{\Gamma_l \seq \Delta_l, \psi}
			\end{prooftree}
		&
	\qquad\qquad\text{then $v$ is replaced by} 
		&
			\begin{prooftree}
				\hypo{\merge{\pi_l'}{\psi}{\pi_r}}
				\infer[no rule]1{\Gamma_l^u,\Gamma_r \seq \Delta_l,\Delta_r}
				\infer[left label= \RuF]1{\Gamma_l,\Gamma_r \seq \Delta_l,\Delta_r}
			\end{prooftree}
		\end{align*}

		\item If $\Ru_l$ is \RuWeak of the form
		\begin{align*}
			\begin{prooftree}
				\hypo{\pi_l'}
				\infer[no rule]1{\Gamma_l \seq \Delta_l}
				\infer[left label=\RuWeak]1{\Gamma_l \seq \Delta_l,\psi}
			\end{prooftree}
		&
	\qquad\qquad\text{then $v$ is replaced by} 
		&
			\begin{prooftree}
				\hypo{\pi_l'}
				\infer[no rule]1{\Gamma_l \seq \Delta_l}
				\infer[left label=\RuWeak]1{\Gamma_l, \Gamma_r \seq \Delta_l,\Delta_r}
			\end{prooftree}
		\end{align*}
		
		\item If $\Ru_l$ is any other rule, $\pi_l$ has the form
		\begin{align*}
			\begin{prooftree}
				\hypo{\pi_l^1}
				\infer[no rule]1{\Gamma_l^1 \seq \Delta_l^1, \psi}
				\hypo{\hdots}				
				\hypo{\pi_l^n}
				\infer[no rule]1{\Gamma_l^n \seq \Delta_l^n, \psi}
				\infer[left label= \Ru]3{\Gamma_l \seq \Delta_l,\psi}
			\end{prooftree}
		\end{align*}
and $v$ is replaced by
		\begin{align*}
			\begin{prooftree}
				\hypo{\merge{\pi_l^1}{\psi}{\pi_r}}
				\infer[no rule]1{\Gamma_l^1, \Gamma_r \seq \Delta_l^1,\Delta_r}
				\hypo{\hdots}				
				\hypo{\merge{\pi_l^n}{\psi}{\pi_r}}
				\infer[no rule]1{\Gamma_l^n, \Gamma_r \seq \Delta_l^n,\Delta_r}
				\infer[left label= \Ru]3{\Gamma_l, \Gamma_r \seq \Delta_l,\Delta_r}
			\end{prooftree}
		\end{align*}
		
	\end{itemize}
	
	Let $\rho_i$ and $\rho_j$ be traversed proofs. We write $\rho_i < \rho_j$ if $\rho_j$  can be obtained from $\rho_i$ by the above construction and $\rho_i \neq \rho_j$. It holds that $<$ is irreflexive, antisymmetric and transitive.
	Moreover, if $\rho_i < \rho_j$, then $\rho_i$ is a subproof of $\rho_j$, in the sense that $\rho_j$ can be obtained from $\rho_i$ by replacing some open leaves in $\rho_i$ by traversed proofs and inserting nodes labelled by \RuDischarge[]. Thus, $\rho_j$ consists of at least the nodes in $\rho_i$ and we can identify nodes in $\rho_i$ with nodes in $\rho_j$.
	
	From now on, whenever we speak about a traversed proof $\rho$ we mean a traversed proof $\rho > \rho_I$.
	
	\bigskip
	\noindent
	Let $\rho > \rho_I$ and let $v$ be an open leaf in $\rho$ labelled by
	\begin{align*}
		\begin{prooftree}
			\hypo{\merge{\pi_l}{\psi}{\pi_r}}
			\infer[no rule]1{\Gamma_l,\Gamma_r \seq \Delta_l,\Delta_r}
		\end{prooftree}
	\end{align*}
	We first define nodes $u_0(v) \in \pi_0$ and  $u_1(v) \in \pi_1$ with
	\begin{enumerate}
		\item $\pi_l = \pi_{u_0(v)}$ and $\pi_r = \pi_{u_1(v)}$,
		\item $\sfS(u_0(v)) = \Gamma_l \seq \Delta_l, \psi$ and $\sfS(u_1(v)) = \psi^a,\Gamma_r' \seq \Delta_r$,
	%	\item $u_0(v)$ is in the root-cluster of $\pi_0$ or is labelled by \RuF and $u_1(v)$ is in the root-component of $\pi_1$ or is labelled by \RuF.
	\end{enumerate}
	The nodes $u_0(v)$ and  $u_1(v)$ are defined by recursion on the construction. For $\rho_I$ define $u_0(v)$ to be the root of $\pi_0$ and $u_1(v)$ to be the root of $\pi_1$. For the recursion step we follow the case distinction. For example, let $\rho$ be a traversed proof with leftmost open leaf $v$, where the last applied rules in $\pi_l$ and $\pi_r$ are \RuDia, respectively. 
	Let $\rho'$ be obtained from $\rho$ in one step with leftmost open leaf $v'$. Then $u_0(v')$ is the child of $u_0(v)$ and $u_1(v')$ is the child of $u_1(v)$.

%	3 follows, as $\pi$ is minimally focused, hence every path through $\pi_0$, which leaves the component, has to pass an \RuF rule, but then the corresponding leaf in $\rho$ gets closed. 
	
	This definition extends to nodes $w$ in a traversed proof $\rho$ below an open leaf. For such nodes $w$ and $w'$ it moreover holds, that if $w$ is the parent of $w'$, then either 
	\begin{enumerate}
		\item $u_0(w) = u_0(w')$,
		\item $u_0(w)$ is the parent of $u_0(w')$ or
		\item $u_0(w)$ is an ancestor of $u_0(w')$, where all but one node on the ancestor path are labelled by \RuDischarge[].  
	\end{enumerate}
	The same holds for the nodes $u_1(w)$ and $u_1(w')$.
	
	Let $\tau$ be the path from the root of $\rho$ to the open leaf $v$. By the above definition we can define corresponding paths $\tau_0$ in $\pi_0$ and $\tau_1$ in $\pi_1$.
	
	Let $n_0$ be the size of $\pi_0$ and $n_1$ be the size of $\pi_1$. By the above argumentation nodes in $\tau$ can only be labelled by at most $ n_0 \cdot n_1$ distinct sequents. Moreover, if $l(\tau) = k\cdot n_1$, then $l(\tau_0) \geq k$. This holds as in every step of the construction either $\pi_l$ or $\pi_r$ is transformed. Whenever $\pi_l$ is transformed case 1 above cannot be the case. But if $l(\tau) \geq n_1$, $\pi_l$ has to be transformed, as otherwise $l(\tau_1) \geq n_1$. Due to Lemma \ref{lem.pathInCluster} and \ref{lem.pathSuccesful}.\ref{lem.pathSucc.Dia} there has to be a \RuDia rule on $\tau_1$, in which case $\pi_l$ gets transformed as well.
	
	\bigskip
	We claim that every open leaf in a traversed proof $\rho$ has height at most $4 \cdot n_0^2 \cdot n_1^2$. For suppose that $v$ is an open leaf of height more than $4 \cdot n_0^2 \cdot n_1^2$. Let $\tau$ be the path from the root of $\rho$ to $v$ and let $\tau_0$ be the corresponding path in $\pi_0$. It holds that $l(\tau_0)\geq 4 \cdot n_0^2 \cdot n_1$. Lemma \ref{lem.pathInCluster} gives a cluster $S$ with size $n_S$ of $\pi_0$ and a subpath $\tau_S$ in $S$ of $\tau_0$ such that $l(\tau_S) \geq 4 \cdot n_S \cdot n_0 \cdot n_1$. Using Lemma \ref{lem.pathSuccesful}.\ref{lem.pathSucc.Kmany} we obtain $2 \cdot n_0 \cdot n_1$ companion nodes $c_1,c_2,...$ in $S$. Hence there are as many traversed proofs with leftmost open leaves $v_1,v_2,...$ with $u_0(v_j) = c_j$ for all $j$. This means that there are $2 \cdot n_0 \cdot n_1$ traversed proofs $\rho_1 < \rho_2 < \cdots < \rho$, with left-most open leaves $v_1,v_2,...$, which are \emph{repeat leaves}, i.e. where $\pi_l$ is of the form
	\begin{align*}
		\begin{prooftree}
			\hypo{\pi_l'}
			\infer[no rule]1{\Gamma_l \seq \Delta_l, \psi}
			\infer[left label= \RuDischarge]1{\Gamma_l \seq \Delta_l,\psi}
		\end{prooftree}
	\end{align*} 
	By the above argumentation the nodes $v_1,v_2,...$ are labelled by at most $n_0 \cdot n_1$ many distinct sequents. Hence there are $i < j$ such that $\sfS(v_i) = \sfS(v_j)$. Due to the choice of the companion nodes in Lemma 
	\ref{lem.pathSuccesful}.\ref{lem.pathSucc.Kmany} the path from $c_i$ to $c_j$ in $\tau_S$ is successful. Note that, if $u_0(w)$ is labelled by \RuFL, where the principal formula is in focus, then so is $w$. Because all sequents in $\tau_S$ have a formula in focus, so does the subpath from $v_i$ to $v_j$ of $\tau$. Hence, the path from $v_i$ to $v_j$ is successful and the leaf gets closed in the next step of transforming $\rho_j$. This contradicts the fact that $v$ in $\rho$ is an open leaf of height more than $4 \cdot n_0^2 \cdot n_1^2$.
	
	In conclusion, as every constructed tree is finitely branching, after finitely many steps a traversed proof $\rho_T$ without open leaves is constructed.	
	As every traversed proof without open leaves is a \GKe proof we have shown the lemma.
\end{proof}

\end{document}